\theoremstyle{plain}
\newtheorem{theorem}{Theorem}[section]
\newtheorem{lemma}[theorem]{Lemma}
\newtheorem{corollary}[theorem]{Corollary}
\theoremstyle{definition}
\newtheorem{definition}[theorem]{Definition}
\newtheorem{observation}[theorem]{Observation}
\DeclareMathOperator*{\argmax}{arg\,max}
\DeclareMathOperator*{\argmin}{arg\,min}
\DeclareMathOperator*{\poly}{poly}
\DeclarePairedDelimiter\ceil{\lceil}{\rceil}
\DeclarePairedDelimiter\floor{\lfloor}{\rfloor}
\DeclareMathOperator*{\prot}{prot}
\newcommand{\proba}{\textsc{DemFairCut}}
\newcommand{\probb}{\textsc{IndFairCut}}
\newcommand{\probc}{\textsc{AuxCut}}
\newcommand{\probd}{\textsc{SB-MinCC}}
\begin{document}

\title{Fair Disaster Containment via Graph-Cut Problems}

%


\author{ Michael Dinitz\thanks{Johns Hopkins University. 
Email: \href{mailto:mdinitz@cs.jhu.edu}{mdinitz@cs.jhu.edu}} \and Aravind Srinivasan\thanks{University of Maryland, College Park. 
Email: \href{mailto:srin@cs.umd.edu}{srin@cs.umd.edu}} \and Leonidas Tsepenekas \thanks{University of Maryland, College Park.  
Email: \href{mailto:ltsepene@umd.edu}{ltsepene@umd.edu}}
\and Anil Vullikanti \thanks{University of Virginia.  
Email: \href{mailto:vsakumar@virginia.edu}{vsakumar@virginia.edu}}
}

\date{}

\maketitle

\begin{abstract}
Graph cut problems are fundamental in Combinatorial Optimization, and are a central object of study in both theory and practice. Furthermore, the study of \emph{fairness} in Algorithmic Design and Machine Learning has recently received significant attention, with many different notions proposed and analyzed for a variety of contexts. In this paper we initiate the study of fairness for graph cut problems by giving the first fair definitions for them, and subsequently we demonstrate appropriate algorithmic techniques that yield a rigorous theoretical analysis. Specifically, we incorporate two different notions of fairness, namely \emph{demographic} and \emph{probabilistic individual} fairness, in a particular cut problem that models disaster containment scenarios. Our results include a variety of approximation algorithms with provable theoretical guarantees.
\end{abstract}

\section{Introduction}\label{sec:intro}
Let $G = (V,E)$ be an undirected graph with vertex set $V$ and edge set $E$, where $n = |V|$ and every $e \in E$ has a cost $w_e \in \mathbb{R}_{\geq 0}$. In addition, we are given a designated ``source'' vertex $s \in V$. We are concerned with attempting to mitigate some sort of ``disaster'' that begins at $s$ and infectiously spreads through the network via the edges.  This means that vertices $v \in V$ that are connected to $s$ (i.e., there exists an undirected $s-v$ path in $G$) are at some sort of risk or disadvantage. 

A natural approach to mitigate the aforementioned spread is to remove edges from $G$, in an attempt to disconnect as many vertices of the graph from $s$ as possible. Specifically, if we remove a cut-set or simply cut $F \subseteq E$ from the graph, we denote by $\prot(V,E \setminus F, s)$ the set of vertices in $V$ that are no longer connected to $s$ in $G_F = (V,E\setminus F)$, and hence are \emph{protected} from the infectious process. At a high-level, the edge removal strategy contains the disastrous event within the set $V \setminus \prot(V,E \setminus F, s)$. Observe now that there is a clear trade-off between the cost $w(F)$\footnote{For a vector $\alpha = (\alpha_1, \alpha_2, \hdots, \alpha_k)$ and a subset $X \subseteq \{1,2,\hdots,k\}$, we use $\alpha(X)$ to denote $\sum_{i \in X}\alpha_i$} of the cut $F$ and $|\prot(V,E \setminus F, s)|$, i.e., the more edges we remove the more vertices we may be able to save. 

The aforementioned trade-off naturally leads to the following optimization problem, which we call \emph{Size Bounded Minimum Capacity Cut} or \probd{} for short. Given a graph $G$ with source vertex $s$ and a integer target value $T > 0$, we want to compute a cut $F \subseteq E$ with the minimum possible cost $w(F)$, such that at least $T$ vertices of $V$ are saved in $G_F = (V,E\setminus F)$, i.e. $|\prot(V,E \setminus F, s)| \geq T$. This problem is NP-hard as shown in \cite{hayrapetyan2005}. The work of \cite{svitkina2004} gave a $O(\log^2 n)$-approximation algorithm for \probd{}, while \cite{hayrapetyan2005,eubank-nature} gave constant factor bicriteria algorithms for it, i.e., algorithms that provide solutions that come within a constant factor of the optimal cut cost, but at the same time might not save at least $T$ vertices. 

Inspired by the recent interest revolving around algorithmic fairness, our goal in this paper is to incorporate such ideas in \probd{}, and initiate the discussion of fairness requirements for cuts in graphs. To the best of our knowledge, our work here is the first to combine fairness with this family of problems.

The first notion of fairness that we consider is the widely used \emph{Demographic Fairness} one. The high-level idea behind this definition is that the set of elements that require ``service'' consists of various subsets---say demographic groups---and the solution should equally and fairly treat and represent each of these groups. In our case, if the vertices of the graph belong to different groups, we would like our solution to fairly separate vertices of each of them from the designated node $s$. \emph{In this way, we will avoid outcomes that completely ignore certain groups for the sake of minimizing the objective function}. Hence, we define the following problem. \\

\noindent \proba{}: In addition to a graph $G = (V, E)$ with weights $\{w_e\}_{e \in E}$ and the source $s \in V$, for some integer $\gamma \geq 1$ we are given sets $V_1, V_2, \hdots, V_{\gamma}$ and values $f_1, f_2, \hdots, f_{\gamma}$, such that $\forall h \in [\gamma]$\footnote{We use $[k]$ to denote $\{1,\hdots,k\}$ for some integer $k \geq 1$} we have $V_h \subseteq V$ and $f_h \in (0,1]$. Note that each $v \in V$ may actually belong to multiple sets $V_h$. Letting $n_h = |V_h|$, the goal is to find a cut $F \subseteq E$ with the minimum possible $w(F)$, subject to the constraint that $|V_h \cap \prot(V,E \setminus F,s)| \geq f_h \cdot n_h$ for all $h \in [\gamma]$. \emph{In words, if each $V_h$ is interpreted as a demographic, we want the minimum cost cut under the condition that at least an $f_h$ fraction of the points in $V_h$ are disconnected from $s$ (for all $h$)}.\\

Instantiating this definition with different values of $f_h$ allows us to model a variety of fairness scenarios. For example, setting $f_h = 1/2$ would let us guarantee a solution that protects at least half the vertices of each $V_h$. Alternatively, we can set $f_h$ to be a decreasing function of $n_h/n$, and thus yield a solution that focuses more on protecting smaller demographics. Moreover, notice that \probd{} is a special case of \proba{}, where $\gamma = 1$ (we only have one demographic group) and $f_1 = \frac{T}{n}$. Hence, \proba{} is NP-hard, since \probd{} is already known to be NP-hard.

The second notion of fairness we consider is called \emph{Probabilistic Individual Fairness}, and was first introduced in the context of robust clustering \cite{Harris2019,anegg2020}. According to it, the final solution should not simply be just one solution, but rather a distribution $\mathcal{D}$ over solutions. Then, considering each input element individually, the probability that it will get ``good service'' in a randomly drawn solution from this distribution, should be at most some given (fairness related) parameter. Obviously, sampling from this constructed distribution $\mathcal{D}$ must be achievable in polynomial time, and we call such distributions \emph{efficiently-sampleable}. Under this notion of fairness, \emph{we avoid outcomes that deterministically prevent satisfactory outcomes for certain individuals}. 

Incorporating the above concept of fairness in \probd{}, implies that besides the global guarantee of saving at least $T$ vertices, we also need to provide a stochastic guarantee for each individual vertex, ensuring it that in the final solution it will be disconnected from $s$ with a certain probability. For instance, ensure that each vertex gets disconnected with probability at least $1/2$, and hence no specific vertex enjoys preferential treatment. The formal definition follows.\\

\noindent \probb{}: In addition to a graph $G = (V, E)$ with weights $\{w_e\}_{e \in E}$, a target $T \in \mathbb{N}_{\geq 0}$ and source $s \in V$, for each $v \in V \setminus \{s\}$ we are also given a value $p_v \in [0,1]$. The goal is to find an efficiently-sampleable distribution $\mathcal{D}$ over the cuts $\mathcal{F}(B) = \{F \subseteq E : w(F) \leq B \land |\prot(V,E \setminus F,s)| \geq T\}$, such that $\Pr_{F \sim \mathcal{D}}[v \in \prot(V, E \setminus F, s)] \geq p_v$ for each $v \in V \setminus \{s\}$, and $B$ is the minimum possible.\\

Further, \probd{} is also a special case of \probb{}, since we can always set $p_v = 0$ for all $v \in V \setminus \{s\}$ and make the stochastic constraints void. Hence, \probb{} is also NP-hard.

\begin{observation}
In both problems, we can assume that the disastrous event simultaneously starts from a set of vertices $S$, instead of just a single designated vertex. This assumption is without loss of generality, since $S$ can be merged into a single vertex $s$ (by retaining all edges between $S$ and $V \setminus S$), thus giving an equivalent formulation that matches ours.
\end{observation}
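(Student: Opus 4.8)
The plan is to turn the informal contraction in the statement into a precise, quantity-preserving reduction. Given an instance whose disaster starts at a \emph{set} $S \subseteq V$, I would define $G' = (V', E')$ by taking a fresh vertex $s$, setting $V' = (V \setminus S) \cup \{s\}$, and obtaining $E'$ from $E$ as follows: delete every edge with both endpoints in $S$; replace every edge $\{u, v\}$ with $u \in S$ and $v \notin S$ by $\{s, v\}$ of the same weight (keeping parallel copies when a vertex $v$ has several neighbors in $S$); and keep unchanged every edge with both endpoints in $V \setminus S$. Writing $E[S]$ for the set of edges inside $S$, this yields a weight-preserving bijection $\phi$ between $E'$ and $E \setminus E[S]$, together with the identity correspondence between the non-source vertices $V \setminus S$ and $V' \setminus \{s\}$.

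First I would note that $E[S]$ is irrelevant on the source side: for any feasible cut $F$ of the multi-source instance, $F \setminus E[S]$ is still feasible and costs no more, since removing an edge inside $S$ cannot disconnect any $v \notin S$ from $S$ (the disaster already originates at every vertex of $S$). So it suffices to consider cuts $F \subseteq E \setminus E[S]$, each corresponding under $\phi$ to a cut $F' \subseteq E'$ of equal weight.

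The crux is the reachability equivalence: for every $v \in V \setminus S$ and every $F \subseteq E \setminus E[S]$, the vertex $v$ is connected to some vertex of $S$ in $(V, E \setminus F)$ if and only if it is connected to $s$ in $(V', E' \setminus F')$. In one direction, a path from $s$ to $v$ in $G'$ has all of its internal vertices in $V \setminus S$, so replacing its first edge $\{s, x\}$ by the original edge $\{u, x\}$ with $u \in S$ gives a walk from $S$ to $v$ avoiding $F$. In the other direction, given any walk from $S$ to $v$ in $(V, E \setminus F)$, I would take the suffix after its last visit to $S$, which is a walk from a vertex of $S$ to $v$ whose only vertex in $S$ is its starting point; applying $\phi^{-1}$ turns it into a walk from $s$ to $v$ in $(V', E' \setminus F')$. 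Since moreover every vertex of $S$ is trivially reachable from $S$, the multi-source protected set is contained in $V \setminus S$, where it coincides with $\prot(V', E' \setminus F', s)$.

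Granting this, the remaining work is bookkeeping, carried out for each problem. For \proba{}, a demographic $V_h$ maps to $V_h \setminus S$ while $f_h$ and $n_h = |V_h|$ are kept from the input; because the protected set avoids $S$, the constraint $|V_h \cap \prot(V, E \setminus F, s)| \ge f_h n_h$ matches its image in $G'$ exactly, feasible cuts and their costs are in bijection, and so the two optimal values agree and any $\alpha$-approximation transfers. For \probb{}, the cut bijection $F \leftrightarrow F'$ pushes a distribution $\mathcal{D}$ forward to a distribution over the analogous cost-$B$, size-$T$ cut family, preserving $\Pr_{F}[v \in \prot(V, E \setminus F, s)]$ for every $v \in V \setminus S$ and preserving efficient sampleability because $\phi$ is polynomial-time computable. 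I expect the only genuinely delicate point --- and the reason this is not literally a tautology --- to be the status of vertices inside $S$: such a vertex belongs to no protected set, so in the multi-source formulation its individual requirement forces $p_v = 0$, and a demographic meeting $S$ can only count its portion outside $S$; once the multi-source problem is stated with this understood, the two instances are feasible (or infeasible) together, and everything above goes through. Self-loops and parallel edges created by the contraction are harmless and need no special treatment.
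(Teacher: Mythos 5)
Your proof is correct and follows exactly the contraction the paper describes (the paper states this observation without any proof, so your write-up is simply a careful formalization of it): the weight-preserving edge bijection, the "last visit to $S$" argument for reachability, and the reduction to cuts avoiding $E[S]$ are all sound. The subtleties you flag --- that a vertex of $S$ can never be protected, so $p_v=0$ is forced for $v\in S$ and a demographic $V_h$ can only count its portion outside $S$ against the threshold $f_h n_h$ --- are genuine and correctly resolved.
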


\subsection{Contribution and Outline}\label{sec:contr}
Our main contribution lies in introducing the first fair variants of graph-cut problems, together with approximation algorithms with provable guarantees for them.

In Section \ref{sec:red-trees} we present a technique that is required in our approach for solving \proba{} and \probb{}. The key insight is that we can reduce these problems on general graphs to the same problems on trees, by using a tree embedding result of \cite{Racke08}.

In Section \ref{sec:dem-fair} we address demographic fairness. At first, we provide an $O(\log n)$-approximation algorithm for \proba{} based on dynamic programming. The latter algorithm runs in polynomial time only when the number of groups $\gamma$ is a constant. When $\gamma$ is not a constant and can be any arbitrary value, we develop a different algorithm based on a linear programming relaxation together with a dependent randomized rounding technique. This result yields an $O\mleft(\frac{\log n \log \gamma}{ \epsilon^2 \cdot \min_h f_h  }\mright)$-approximation for any $\epsilon > 0$. However, we mention that the covering guarantee it provides to each demographic $V_h$ is only that at least $(1-\epsilon)f_h n_h$ vertices of it will be saved. Regarding the dependence on $\min_h f_h$, we believe that in realistic fairness related applications the covering fractions $f_h$ should be relatively big, i.e., some constant $f_h = \Omega(1)$, since we care about protecting the vertices in the best way possible. Hence, the approximation ratio of our algorithm can be thought of as $O\mleft(\frac{\log n \log \gamma}{ \epsilon^2 }\mright)$. Finally, we show that even on tree instances, \proba{} with arbitrary $\gamma$ is actually quite hard: it cannot be approximated better than $\Omega(\log \gamma)$. We do this by demonstrating an approximation factor preserving reduction from \textsc{Set Cover}.

In Section \ref{sec:ind-fairness} we provide an  $O(\log n)$-approximation algorithm for \probb{}. The high-level approach of this result relies on the round-or-cut technique developed by \cite{anegg2020}, which we tailor in a way that suits the specific needs of our problem.

Finally, notice that since \probd{} is a special case of \proba{} (with $\gamma = 1$), and also a special case of \probb{} (when $p_v = 0$ for all $v$), our dynamic programming algorithm from Section \ref{sec:dem-fair} and the algorithm of Section \ref{sec:ind-fairness}, both provide a $O(\log n)$-approximation for \probd{}. This constitutes an improvement over the best previously known $O(\log^2 n)$-approximation of \cite{svitkina2004}.

\subsection{Motivating Examples}\label{sec:mot-exmp}

Regarding demographic fairness, consider the following potential application. The vertices of the graph $V$ would correspond to geographic areas across the globe, and an edge $(u,v) \in E$ would denote whether or not there is underlying infrastructure, e.g., highways or airplane routes, that can transport people between areas $u$ and $v$. The disastrous event in this scenario is the spread of a disease in a global health crisis. If an area $u \in V$ is ``infected'', then it is natural to assume that neighboring areas (i.e., areas $v \in V$ with $(u,v) \in E$) can also get infected if we allow people to travel between $u$ and $v$. A central planner will now naturally try to break a set of connections $F \subseteq E$ from the infrastructure graph, such that the total cost $w(F)$ of these actions will be as small as possible, while some guarantee on the number of protected areas $|\prot(V,E\setminus F,s)|$ is also satisfied. The value $w(F)$ can be interpreted as the economic cost of the proposed strategy $F$, e.g., the lost revenue of airline companies resulting from cancelling flights. 

In terms of fairness, we can think of the areas $V$ as coming from $\gamma$ different countries, with $V_h$ being the areas associated with country $h \in [\gamma]$. Then, a fair solution would not tolerate a discrepancy in how many areas are protected across different countries. For example, a fair approach would be to ensure that each country has at least half of its areas protected, since the less ``infected'' areas each country has, the more easily it can keep its local crisis under control.

As far as individual fairness is concerned, consider a computer network facing the spread of a computer virus. In this scenario, we want to minimize the cost of the connections removed, such that the infectious process is kept under control and thus a certain number of users $T$ does not get infected. However, each individual user of the network would arguably prefer to be in the set of protected vertices. Our notion of  individual fairness as studied in \probb{}, will ensure exactly that in a stochastic sense, by using appropriate values $p_v$.

\subsection{Related Work}\label{sec:rel-wrk}

The unfair variant of our problems, i.e., \probd{}, was studied in \cite{hayrapetyan2005, svitkina2004, eubank-nature}. These papers also considered additional versions of \probd{}, where the goal was to maximize $|\prot(V,E \setminus F, s)|$ (equivalently minimize $|V \setminus \prot(V,E \setminus F, s)|$) subject to an upper bound constraint on $w(F)$.

The study of fairness in algorithmic design and machine learning has recently received significant attention. This is mainly due to the realization that the output of standard optimization algorithms can very well lead to solutions that are highly unfair and hurtful for the individuals or the groups involved. Examples of this include racial bias in Airbnb rentals \cite{fairness-1}, gender bias in Google's Ad Settings \cite{fairness-2} and discrimination in housing ads in Facebook \cite{fairness-3}.  There are two reasons why such unfortunate events occur. First, the training datasets may include implicit biases, and hence when algorithms are trained on them, they learn to perpetuate the underlying biases. Second, in many situations, even if the data is completely unbiased, merely optimizing an objective function does not suffice if fairness considerations are at play. In such cases, we must explicitly incorporate fairness constraints in our algorithm design process. Our work here tries to accomplish the latter.

Although algorithmic fairness has not yet been addressed in cut problems, there are other areas such as classification and clustering were examples of fair algorithms are abundant. For example, \cite{Chierichetti2017,bercea2019,bera2020,huang2019,backurs2019,ahmadian2019} consider notions of demographic fairness in clustering, while \cite{brubach2020, anegg2020, Harris2019} focus on notions of individually-fair clustering. In the context of fair classification, one of the most seminal works with significant implications in other fields as well, is the paper of \cite{Dwork2012}. This work studies individual fairness and its interplay with a notion of demographic fairness, namely statistical parity. Some excellent surveys on the topic of algorithmic fairness are \cite{barocas-hardt-narayanan, mehrabi2021}.

\section{Reduction to Tree Instances}\label{sec:red-trees}

In this section we show how both \proba{} and \probb{} can be effectively reduced to solving an appropriate problem on a tree instance. To do this, we use the following lemma.


\begin{lemma}[\cite{Racke08}] \label{racke-lem}
For any undirected $G=(V,E)$ with edge costs $w_e \geq 0$, we can efficiently construct a collection of trees $T_1 = (V,E_1), ~T_2=(V,E_2), \hdots, ~T_k=(V,E_k)$ with tree $T_i$ having an edge-cost function $w^i : E_i \mapsto \mathbb{R}_{\geq 0}$, and find non-negative multipliers $(\lambda_1, \dots, \lambda_k)$, such that $\sum_{i=1}^k \lambda_i = 1$ and $k = \poly(|V|)$\footnote{Throughout, ``poly" will denote an arbitrary univariate polynomial: its usage in different places could connote different polynomials.}. Further, for any $S \subseteq V$ let $\delta(S)$ be the set of edges in $E$ with exactly one endpoint in $S$, and $\delta_i(S)$ denote the set of edges in $E_i$ with exactly one endpoint in $S$. Then, for any $S \subseteq V$:
\begin{enumerate}
    \item $w(\delta(S)) \leq w^i(\delta_i(S))$ for every $i \in [k]$
    \item $\sum_{i=1}^k \lambda_i w^i(\delta_i(S)) \leq O(\log n) w(\delta(S))$
\end{enumerate}
\end{lemma}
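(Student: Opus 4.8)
This is, in essence, Räcke's optimal hierarchical decomposition theorem, so my plan is to recall the construction, verify the ``domination'' inequality directly, and obtain the averaging inequality from the known congestion bound. I would produce each $T_i$ from a laminar family $\mathcal{L}_i$ of subsets of $V$ containing $V$ itself and every singleton $\{v\}$: put one node per set of $\mathcal{L}_i$, join each non-root set $H$ to its parent, and let that edge have cost $w(\delta(H))$, the capacity of the boundary of $H$ in $G$. (The internal ``Steiner'' nodes can then be contracted so that the tree sits on vertex set $V$, as the statement asks, with their structure absorbed into $w^i$.)

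For part 1, I would fix $S \subseteq V$, identify it with the corresponding leaves of $T_i$, and observe that $\delta_i(S)$ is exactly the set of tree edges on the frontier separating the $S$-leaves from the rest. A short charging argument --- send each edge of $\delta(S)$ to the frontier tree-edge of the smallest cluster of $\mathcal{L}_i$ that still separates its two endpoints, and use that tree-edge costs are boundary capacities in $G$ --- then gives $w(\delta(S)) \le w^i(\delta_i(S))$. Importantly, this holds for every single tree, which is all part 1 requires.

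The real work is part 2: I need a \emph{single} distribution $(\lambda_1,\dots,\lambda_k)$ over decompositions of this kind with $\sum_{i=1}^k \lambda_i\, w^i(\delta_i(S)) \le O(\log n)\, w(\delta(S))$ holding simultaneously for all cuts $S$. I would get this from LP duality: by a min--max (zero-sum game) argument, such a distribution exists with ratio $q$ if and only if there is an oblivious routing scheme supported on trees of the above form with congestion $q$, and Räcke's theorem supplies one with $q = O(\log n)$. Concretely, that tree is built top-down, repeatedly splitting each current cluster via a sparsest-cut / max-concurrent-flow subroutine whose dual bounds the per-level inflation of boundary capacities, and a potential-function argument on the vertex ``loads'' telescopes the per-level factors into the optimal $O(\log n)$ overall.

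The main obstacle is precisely that last ingredient. The laminar-family bookkeeping behind part 1 is routine, and the duality reduction for part 2 is standard; what is genuinely hard is pushing the cut distortion of a distribution of trees down to $O(\log n)$ rather than a cruder $\polylog n$, which is the technical core of \cite{Racke08} and rests on the flow--cut duality and the load-based potential argument developed there. For the purposes of this paper, we invoke that bound as a black box.
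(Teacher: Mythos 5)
The paper offers no proof of this lemma---it is imported verbatim from \cite{Racke08}---and your treatment is consistent with that: you sketch the standard laminar-decomposition construction, give a correct charging argument for the domination inequality in part 1, and correctly isolate the $O(\log n)$ cut-distortion bound for the distribution over trees as the technical core to be invoked as a black box from the same reference. (One minor note: the optimal $O(\log n)$ factor in \cite{Racke08} comes from the transfer theorem connecting capacity approximation to FRT-style distance-preserving tree embeddings, not from the recursive sparsest-cut/potential-function construction you describe, which only yields polylogarithmic distortion; since you use the bound as a black box, this does not affect the correctness of your account.)
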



\begin{definition}\label{dem-alg}
We call an algorithm for \proba{} \emph{$(\rho,\alpha)$-bicriteria}, if for any given problem instance $\mathcal{I}=\{V,E,s,w,V_1,\hdots, V_{\gamma}, \vec f\}$ with optimal value $OPT_{\mathcal{I}}$, it returns a solution $F$ such that \textbf{1)} $w(F) \leq \rho OPT_{\mathcal{I}}$, and \textbf{2)} $|\prot(V,E \setminus F,s) \cap V_h| \geq \alpha f_h n_h, ~ \forall h \in [\gamma]$. 
\end{definition}

\begin{lemma}\label{dem-tree-red}
If we have a $(\rho,\alpha)$-bicriteria algorithm for \proba{} in trees, we can get a $(\rho \cdot O(\log n),\alpha)$-bicriteria algorithm for \proba{} in general graphs.  
\end{lemma}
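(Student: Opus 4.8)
The plan is to use Räcke's tree collection from Lemma~\ref{racke-lem} to move between the general graph $G$ and the trees $T_i$, applying the bicriteria tree algorithm on each $T_i$ and picking the best result. The crucial conceptual point is that a cut $F$ in the tree $T_i$ corresponds naturally to a cut in $G$: if $F \subseteq E_i$ is removed from $T_i$, then $T_i$ breaks into connected components, and letting $C$ be the union of components not containing $s$ (so $\prot(V, E_i \setminus F, s) = C$), we can instead remove $\delta(C)$ from $G$. By construction $\prot(V, E \setminus \delta(C), s) \supseteq C$ (removing $\delta(C)$ disconnects all of $C$ from $s$ in $G$), so the demographic coverage is preserved, and by property 1 of Lemma~\ref{racke-lem} we have $w(\delta(C)) \le w^i(\delta_i(C)) \le w^i(F)$ — the last inequality because $\delta_i(C) \subseteq F$ (every edge crossing $C$ in $T_i$ must have been cut). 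Conversely, given the optimal cut $F^*$ in $G$ with $\prot(V, E \setminus F^*, s) = P^*$, for each $i$ the edge set $\delta_i(P^*) \subseteq E_i$ is a feasible solution for \proba{} on $T_i$ with the same demographic coverage (at least $f_h n_h$ for each $h$), and its cost is $w^i(\delta_i(P^*))$.

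The algorithm is then: for each $i \in [k]$, run the $(\rho,\alpha)$-bicriteria tree algorithm on $T_i$ (with the same groups $V_h$ and fractions $f_h$) to get a cut $F_i \subseteq E_i$; convert each $F_i$ to a cut $\delta(C_i)$ in $G$ as above; output the cheapest $\delta(C_i)$ over all $i$. Since $k = \poly(n)$ and each tree run is polynomial, the whole procedure is polynomial. The coverage guarantee $|\prot(V, E\setminus \delta(C_i), s) \cap V_h| \ge |C_i \cap V_h| = |\prot(V, E_i \setminus F_i, s) \cap V_h| \ge \alpha f_h n_h$ carries over for every $i$, hence in particular for the one we output.

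For the cost bound, fix the optimal solution $F^*$ of the general-graph instance, with protected set $P^*$ and $OPT_{\mathcal I} = w(F^*) \ge w(\delta(P^*))$ (removing exactly the boundary of $P^*$ is no more expensive and still protects $P^*$; more simply, $\delta(P^*) \subseteq F^*$... wait, one must be a little careful — actually $w(\delta(P^*)) \le w(F^*)$ holds because $F^*$ must contain $\delta(P^*)$, as any uncut edge from $P^*$ to its complement would connect part of $P^*$ to $s$). For each $i$, $\delta_i(P^*)$ is feasible on $T_i$, so the tree algorithm returns $F_i$ with $w^i(F_i) \le \rho \cdot w^i(\delta_i(P^*))$, and thus $w(\delta(C_i)) \le w^i(F_i) \le \rho \cdot w^i(\delta_i(P^*))$. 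Averaging with the multipliers $\lambda_i$ and using property 2 of Lemma~\ref{racke-lem}:
\[
\min_i w(\delta(C_i)) \;\le\; \sum_{i=1}^k \lambda_i\, w(\delta(C_i)) \;\le\; \rho \sum_{i=1}^k \lambda_i\, w^i(\delta_i(P^*)) \;\le\; \rho \cdot O(\log n)\, w(\delta(P^*)) \;\le\; \rho \cdot O(\log n)\, OPT_{\mathcal I}.
\]
This gives the claimed $(\rho \cdot O(\log n), \alpha)$-bicriteria guarantee.

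The step I expect to require the most care is the bookkeeping around converting a tree cut into a graph cut and checking the inequality $w(\delta(C_i)) \le w^i(F_i)$: one needs $\delta_i(C_i) \subseteq F_i$, which requires choosing $C_i$ to be exactly the set of vertices separated from $s$ in $T_i \setminus F_i$ (so that no edge of $F_i$ is "wasted" in a way that breaks the containment), and then chaining $w(\delta(C_i)) \le w^i(\delta_i(C_i)) \le w^i(F_i)$ via property~1. Everything else is a direct application of the two properties of Räcke's decomposition together with an averaging argument.
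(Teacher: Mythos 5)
Your proposal is correct and follows essentially the same argument as the paper: run the tree algorithm on each Räcke tree, convert each tree cut $F_i$ to the graph cut $\delta(X_i)$ with $X_i=\prot(V,E_i\setminus F_i,s)$, select the cheapest via $\argmin_i w(\delta(X_i))$, and bound its cost by averaging over the $\lambda_i$ and chaining $\delta_i(X_i)\subseteq F_i$ with the two properties of Lemma~\ref{racke-lem} applied to $\delta_i(X^*)$. The subtlety you flag about choosing $X_i$ to be exactly the separated set so that $\delta_i(X_i)\subseteq F_i$ is precisely the step the paper also relies on.
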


\begin{proof}
If $\mathcal{I}=\{V,E,s,w,V_1,\hdots, V_{\gamma}, \vec f\}$ is the general instance, we first apply the result of Lemma~\ref{racke-lem} in order to get a collection of trees $T_1 = (V,E_1), \hdots, ~T_k=(V,E_k)$, where each tree $T_i$ has an associated edge weight function $w^i$. We then use the given algorithm and solve \proba{} in each tree instance $\mathcal{I}_i = \{V,E_i,s,w^i,V_1,\hdots, V_{\gamma}, \vec f\}$, and get a solution $F_i \subseteq E_i$ in return. For the solution $F_i$ we compute for $\mathcal{I}_i$, let $X_i = \prot(V, E_i \setminus F_i, s)$, and note that the properties of the algorithm ensure $|X_i \cap V_h| \geq \alpha f_h n_h, ~\forall h \in [\gamma]$. 

After running the algorithm in each tree instance, we find the tree $T_m$ with $m = \argmin_{i} w(\delta(X_i))$, and we set our solution for the general graph to be $\delta(X_m)$. This means that in our general solution $X_m \subseteq \prot(V, E \setminus \delta(X_m),s)$. Combining this observation with the fact that $|X_m \cap V_h| \geq \alpha f_h n_h$ for all $h \in [\gamma]$, implies that in the solution for the general graph we again satisfy all demographic constraints up to an $\alpha$ violation. We now only have to reason about the cost of $\delta(X_m)$.

Let $X^*$ be the set of vertices not connected to $s$ in the optimal solution of $\mathcal{I}$. If $OPT$ is the value of the latter, then $w(\delta(X^*)) \leq OPT$. Also, since $X^*$ satisfies all $\gamma$ demographic constraints exactly, the set $\delta_i(X^*)$ is a feasible solution for $\mathcal{I}_i$, and $OPT_{\mathcal{I}_i}\leq w^i(\delta_i(X^*))$. Hence, because $\delta_i(X_i) \subseteq F_i$: 
\begin{align}
    w^i(\delta_i(X_i)) \leq \rho \cdot OPT_{\mathcal{I}_i }\leq \rho \cdot w^i(\delta_i(X^*)) \label{dem-tree-aux-1}
\end{align}
Using the definition of $m$ and the first property of the trees from Lemma \ref{racke-lem} gives
\begin{align}
    w(\delta(X_m)) &\leq \sum_{i=1}^k \lambda_i w(\delta(X_i)) \leq \sum_{i=1}^k \lambda_i w^i(\delta_i(X_i)) \label{dem-tree-aux-2}
\end{align}
Combining (\ref{dem-tree-aux-1}), (\ref{dem-tree-aux-2}) and the second property of Lemma \ref{racke-lem} yields
\begin{align}
    w(\delta(X_m)) &\leq \rho \sum_{i=1}^k \lambda_i w^i(\delta_i(X^*)) \leq \rho \cdot O(\log n) \cdot  w(\delta(X^*)) \leq \rho \cdot O(\log n) \cdot  OPT&\qedhere
\end{align}
\end{proof}

Our approach for tackling \probb{} uses as a black-box an algorithm for a new problem, which we call \probc{} and we formally define below. In order to get an algorithm for general instances of \probc{}, we again use a reduction to trees.\\

\noindent \probc{}: We are given an undirected graph $G=(V,E)$, a designated vertex $s \in V$, a budget $B > 0$, and a target value $T \in \mathbb{N}_{\geq 0}$. In addition, each $e \in E$ has a weight $w_e \geq 0$, and each vertex $v \in V \setminus \{s\}$ has a value $a_v \geq 0$. The goal is to find a cut $F$ with $w(F) \leq B$ and $|\prot(V,E \setminus F, s)| \geq T$, that maximizes $a(\prot(V,E \setminus F,s))$.

\begin{definition}\label{aux-alg}
We say that an algorithm is $(1,1,\rho)$-bicriteria for \probc{}, if for any given instance $\mathcal{I} = (V,E,B,T,w,s,a)$ of the problem with optimal value $OPT_{\mathcal{I}}$, it returns a set of edges $F$, such that \textbf{1)} $w(F) \leq \rho B$, \textbf{2)} $|\prot(V,E\setminus F,s)| \geq T$ and \textbf{3)} $a(\prot(V,E\setminus F,s)) \geq OPT_{\mathcal{I}}$.
\end{definition}

\begin{lemma}\label{aux-red}
If we have a $(1,1,\rho)$-bicriteria algorithm for \probc{} in tree instances, we can devise a $(1,1,\rho \cdot O(\log n))$-bicriteria algorithm for \probc{} in general graphs.
\end{lemma}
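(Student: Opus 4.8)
The plan is to mimic the proof of Lemma~\ref{dem-tree-red}, but with two twists forced by the budget constraint of \probc{}: we must relax the budget by an $O(\log n)$ factor before invoking the tree algorithm, and we must select the winning tree by its objective value rather than by its cost.

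First I would apply Lemma~\ref{racke-lem} to the general instance $\mathcal{I}=(V,E,B,T,w,s,a)$, obtaining trees $T_1=(V,E_1),\dots,T_k=(V,E_k)$ with cost functions $w^i$ and multipliers $\lambda_i$ summing to $1$. Let $c=O(1)$ be the constant hidden in property~2 of Lemma~\ref{racke-lem}, and set the relaxed budget $B'=c\log n\cdot B$. For each $i\in[k]$ I run the given $(1,1,\rho)$-bicriteria algorithm on the tree instance $\mathcal{I}_i=(V,E_i,B',T,w^i,s,a)$; whenever it returns a cut $F_i$, put $X_i=\prot(V,E_i\setminus F_i,s)$, so that $|X_i|\geq T$, $w^i(F_i)\leq\rho B'$, and $a(X_i)\geq OPT_{\mathcal{I}_i}$. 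Among all trees on which the algorithm succeeded I then output the cut $F=\delta(X_m)$ in $G$, where $m=\argmax_i a\big(\prot(V,E\setminus\delta(X_i),s)\big)$.

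Next I would verify the three bicriteria conditions of Definition~\ref{aux-alg}. For the cost, note $\delta_m(X_m)\subseteq F_m$, so property~1 of Lemma~\ref{racke-lem} gives $w(\delta(X_m))\leq w^m(\delta_m(X_m))\leq w^m(F_m)\leq\rho B'=\rho\cdot O(\log n)\cdot B$; crucially this bound holds for every tree, so the choice of $m$ cannot hurt the cost. For the target, $X_m\subseteq\prot(V,E\setminus\delta(X_m),s)$ and $|X_m|\geq T$, hence $|\prot(V,E\setminus F,s)|\geq T$. For the objective, let $X^*$ be the optimal protected set for $\mathcal{I}$, so $a(X^*)=OPT_{\mathcal{I}}$, $|X^*|\geq T$, and $w(\delta(X^*))\leq B$. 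By property~2, $\sum_i\lambda_i w^i(\delta_i(X^*))\leq c\log n\cdot w(\delta(X^*))\leq B'$, so some tree $T_j$ satisfies $w^j(\delta_j(X^*))\leq B'$; since $s\notin X^*$, removing $\delta_j(X^*)$ from $T_j$ disconnects every vertex of $X^*$ from $s$, so $\delta_j(X^*)$ is feasible for $\mathcal{I}_j$ and $OPT_{\mathcal{I}_j}\geq a(X^*)=OPT_{\mathcal{I}}$. Therefore $a(\prot(V,E\setminus F,s))\geq a(\prot(V,E\setminus\delta(X_j),s))\geq a(X_j)\geq OPT_{\mathcal{I}_j}\geq OPT_{\mathcal{I}}$, where the first inequality is by the choice of $m$ and the second uses $X_j\subseteq\prot(V,E\setminus\delta(X_j),s)$ and $a\ge 0$.

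The main obstacle, and the place where this departs from Lemma~\ref{dem-tree-red}, is that the budget constraint $w(F)\le B$ is \emph{not} preserved by the embedding: cuts only become more expensive in the trees (property~1), so the optimal protected set $X^*$ need not be cheaply cuttable in any single tree. The fix is to budget $O(\log n)\cdot B$ on each tree and use the averaging bound of property~2 to guarantee that in at least one tree $X^*$ is still within budget, which is exactly enough to certify $OPT_{\mathcal{I}}$ as achievable there; correspondingly the output tree must be chosen by objective value, since the cost is automatically under control for all of them. I would also note the minor bookkeeping point that some tree instances may be infeasible at budget $B'$, which is harmless — we simply ignore those trees, and tree $T_j$ above witnesses that at least one feasible output exists.
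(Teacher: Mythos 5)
Your proposal is correct and follows essentially the same route as the paper's proof: embed via Lemma~\ref{racke-lem}, inflate the budget to $O(\log n)\cdot B$ on each tree, select the winning tree by objective value, and use the averaging bound of property~2 to exhibit one tree in which $\delta_j(X^*)$ remains within the relaxed budget. The only cosmetic differences are that the paper picks $m=\argmax_i a(X_i)$ rather than by the value of the induced cut in $G$ (both work, since $X_i\subseteq\prot(V,E\setminus\delta(X_i),s)$ and $a\geq 0$), and your explicit note about skipping infeasible tree instances is a harmless bookkeeping point the paper leaves implicit.
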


\begin{proof}
Let $\mathcal{I} = (V,E,B,T,w,s,a)$ be an instance of \probc{} for a general graph. We first apply the result of Lemma \ref{racke-lem} in order to get a collection of trees $T_i=(V,E_i)$ with edge-weight functions $w^i$. Then, for each such tree we create an instance $\mathcal{I}_i = (V,E_i,B\cdot O(\log n),T,w^i,s,a)$, and we use the given bicriteria algorithm to solve \probc{} on it. Let $F_i \subseteq E_i$ the solution we get for $\mathcal{I}_i$, and for notational convenience let again $X_i = \prot(V, E_i \setminus F_i, s)$. After that, we find the tree $T_m$ with $m = \argmax_{i} a(X_i)$, and we set our solution for the general graph to be $\delta(X_m)$. This means that in our general solution we again get $X_m \subseteq \prot(V, E \setminus \delta(X_m), s)$.

At first, because of the properties of the algorithm used on $\mathcal{I}_i$, we have $|X_m| \geq T$, and therefore even in our solution for the general graph we end up saving at least $T$ vertices.

Furthermore, because $\delta_i(X_i) \subseteq F_i$, the properties of the bicriteria algorithm give $w^i(\delta_i(X_i)) \leq \rho \cdot O(\log n) \cdot B$ for every $i$. From the first property in Lemma \ref{racke-lem} we thus get 
\begin{align}
    w(\delta(X_m)) \leq w^m(\delta_m(X_m)) \leq \rho \cdot O(\log n) \cdot B \notag
\end{align}

To conclude we need to show that $a(X_m) \geq OPT_{\mathcal{I}}$, where $OPT_{\mathcal{I}}$ the value of the optimal solution of $\mathcal{I}$. Let also $X^*$ be the set of vertices not connected to $s$ in the optimal solution of $\mathcal{I}$. Since $X^*$ is the optimal such set of vertices, we have $|X^*| \geq T$ and $w(\delta(X^*))\leq B$. Moreover, let $m^* = \argmin_{i}w^i(\delta_i(X^*))$. The definition of $m^*$ and the second property from Lemma \ref{racke-lem} give 
\begin{align}
    w^{m^*}(\delta_{m^*}(X^*)) &\leq \sum_{i=1}^k \lambda_i w^i(\delta_i(X^*)) \leq O(\log n)w(\delta(X^*)) \leq B \cdot O(\log n) \notag
\end{align}
Hence $\delta_{m^*}(X^*)$ is feasible for $\mathcal{I}_{m^*}$ (recall that $|X^*| \geq T$), and since the given algorithm is a $(1,1,\rho)$-bicriteria we get
$
    a(X_m) \geq a(X_{m^*}) \geq OPT_{\mathcal{I}_{m^*}} \geq a(X^*) = OPT_{\mathcal{I}}
$.
\end{proof}

\section{Addressing Demographic Fairness}\label{sec:dem-fair}

In this section we tackle \proba{} and present two algorithms for it. The first works when is $\gamma$ a constant, and is an $O(\log n)$-approximation. The second addresses the case of an arbitrary $\gamma$, and for any $\epsilon > 0$ it is an $\big{(}O\mleft(\frac{\log n \log \gamma}{ \epsilon^2 \cdot \min_h f_h}\mright), 1-\epsilon)$-bicriteria one. 

\subsection{Solving \normalfont{\proba{}} \textbf{for $\gamma = O(1)$}}\label{sec:dem-fair-const}

Given Lemma \ref{dem-tree-red}, we can focus on only solving the problem in tree instances. Specifically, we show that when $\gamma = O(1)$ the problem in trees can be solved optimally via dynamic programming. Without loss of generality, we can also assume that the given tree is rooted at $s$ and it is binary. For details on why this assumption is safe to use, we refer the reader to Lemma $15.18$ from \cite{approx}. Before we describe our approach we need some additional notation. For a vertex $v$, let $\phi_h(v) = 1$ if $v \in V_h$ and $0$ otherwise.

Our dynamic programming algorithm is based on a table $M$, where $M[v,k_1, k_2, \hdots, k_\gamma]$ represents the minimum cost of a cut in the subtree rooted at $v$, so that there are exactly $k_h$ nodes from $V_h$ that are connected to $v$. Let $v_r$ be the right child of $v$, and let $v_{\ell}$ be the left child of $v$. Observe that the optimal solution either cuts neither of the edges from $v$ to its children, just the left edge, just the right edge, or both of the edges. So, we set $M[v,k_1, k_2, \hdots, k_\gamma]$ to the minimum of the following:
\begin{enumerate}
    \item $\min \Big{\{}M[v_{\ell}, k^{\ell}_1, k^{\ell}_2, \hdots, k^{\ell}_\gamma] + M[u_{r}, k^{r}_1, k^{r}_2, \hdots, k^{r}_\gamma]:  k^{\ell}_h + k^r_h + \phi_h(v)= k_h ~\forall h \in [\gamma]\Big{\}}$
    \item $\min \Big{\{} w_{(v, v_{\ell})} + M[v_{r}, k'_1, k'_2, \hdots, k'_\gamma]: k'_h + \phi_h(v) = k_h ~\forall h \in [\gamma] \Big{\}}$ 
    \item $\min \Big{\{}w_{(v, v_r)} + M[v_{\ell}, k'_1, k'_2, \hdots, k'_\gamma]: k'_h + \phi_h(v) = k_h ~\forall h \in [\gamma] \Big{\}}$
    \item $w_{(v, v_r)} + w_{(v, v_{\ell})}$ if $k_h = \phi_h(v)$ for all $h \in [\gamma]$, $+\infty$ otherwise.
\end{enumerate}
The first case above corresponds to cutting neither of the edges $(v,v_r)$, $(v,v_{\ell})$, the second to cutting only $(v,v_{\ell})$, the third to cutting only $(v,v_r)$, and the fourth to cutting both. 

To fill in $M$, we begin by initializing $M[v,\phi_1(v), \phi_2(v), \hdots, \phi_\gamma(v)] = 0$ for all leaves $v$ of the tree, and set all other table entries to $+\infty$. Then we proceed by filling the table bottom-up. There are at most $O(n^{\gamma+1})$ table entries, and to compute each one we need to access at most $2n^{\gamma}$ other ones. Thus, the total runtime is $O(n^{2\gamma+1})$. Finally, in order to find the optimal cut, we look for the minimum entry $M[s, k_1,\hdots,k_\gamma]$, such that $k_h \leq (1-f_h) n_h$ for all $h \in [\gamma]$.

\begin{theorem}\label{const-dem-tree}
When $\gamma$ is a constant, we have an optimal dynamic programming algorithm for \proba{} in trees, running in time $O(n^{2\gamma+1})$.
\end{theorem}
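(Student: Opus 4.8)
The plan is to establish two things about the algorithm described above: that it runs in time $O(n^{2\gamma+1})$, and that it returns an optimal cut. The running time is essentially immediate from the description. Each coordinate $k_h$ ranges over $\{0,1,\dots,n_h\}\subseteq\{0,\dots,n\}$, so the table $M$ has $O(n^{\gamma+1})$ entries. To evaluate the four cases defining one entry $M[v,k_1,\dots,k_\gamma]$ we enumerate, in case 1, the vectors $(k^\ell_1,\dots,k^\ell_\gamma)$ — each of which, together with $k_h$ and $\phi_h(v)$, determines $k^r_h$ — giving $O(n^\gamma)$ possibilities with two table lookups each, while cases 2--4 are no more expensive. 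Hence each entry costs $O(n^\gamma)$, and the total is $O(n^{\gamma+1})\cdot O(n^\gamma)=O(n^{2\gamma+1})$; filling the table bottom-up along the rooted binary tree does not change this. The preprocessing step that turns an arbitrary tree instance into a rooted binary one is justified by Lemma 15.18 of \cite{approx} and preserves optimality.

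For correctness I would prove, by induction on the structure of the tree, the invariant that for every vertex $v$ and every vector $(k_1,\dots,k_\gamma)$, the entry $M[v,k_1,\dots,k_\gamma]$ equals the minimum of $w(F)$ over all subsets $F$ of the edges of the subtree $T_v$ rooted at $v$ for which the connected component of $v$ in $T_v\setminus F$ contains exactly $k_h$ vertices of $V_h$ for each $h\in[\gamma]$ (and $+\infty$ if no such $F$ exists). The base case is a leaf $v$: $T_v$ has no edges and the component of $v$ is $\{v\}$, so the only finite entry is $M[v,\phi_1(v),\dots,\phi_\gamma(v)]=0$, matching the initialization. For an internal vertex $v$ with children $v_\ell,v_r$, any feasible $F$ is specified by whether it contains $(v,v_\ell)$, whether it contains $(v,v_r)$, and its restrictions to the edges of $T_{v_\ell}$ and $T_{v_r}$. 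If $(v,v_\ell)\notin F$, the part of $v$'s component lying in $T_{v_\ell}$ is exactly $v_\ell$'s component in $T_{v_\ell}\setminus F$; if $(v,v_\ell)\in F$, that part is empty, and — this is the key point — cutting any further edge inside $T_{v_\ell}$ only increases $w(F)$ without changing $v$'s component, so an optimal $F$ cuts nothing else in $T_{v_\ell}$, which is precisely why cases 2 and 4 carry no $M[v_\ell,\cdot]$ term. The symmetric statement holds for $v_r$. Ranging over the four cut/no-cut combinations for the two child edges, adding $\phi_h(v)$ to count $v$ itself, and invoking the inductive hypothesis on $T_{v_\ell}$ and $T_{v_r}$, reproduces exactly the four expressions in the recurrence; taking their minimum establishes the invariant at $v$.

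It remains to read off the answer. Applying the invariant at the root $s$ (where $T_s=T$): in a tree, a vertex is protected under a cut $F$ iff some edge on its unique path to $s$ lies in $F$, so the non-protected set $V\setminus\prot(V,E\setminus F,s)$ is exactly the component of $s$, and thus the coordinate $k_h$ in $M[s,k_1,\dots,k_\gamma]$ equals $|V_h\cap(V\setminus\prot(V,E\setminus F,s))|$. Consequently $|V_h\cap\prot(V,E\setminus F,s)|=n_h-k_h$, and the demographic constraint $|V_h\cap\prot(V,E\setminus F,s)|\ge f_h n_h$ is equivalent to $k_h\le(1-f_h)n_h$. Minimizing $M[s,k_1,\dots,k_\gamma]$ over all vectors with $k_h\le(1-f_h)n_h$ for every $h$ therefore yields exactly the optimal value of \proba{} on the tree, with a minimizing cut recovered by standard backpointers.

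The only step that requires genuine care is the inductive step — specifically the observation that once the edge linking a subtree to its parent is cut it is without loss of generality to cut nothing further inside that subtree, which is what makes the four-case recurrence simultaneously sound and exhaustive. Everything else is bookkeeping of the counts $k_h$ and the elementary translation of the fairness constraint into the condition $k_h\le(1-f_h)n_h$.
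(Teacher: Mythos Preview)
Your proposal is correct and follows exactly the paper's approach: the same table $M$, the same four-case recurrence, the same runtime accounting ($O(n^{\gamma+1})$ entries times $O(n^{\gamma})$ per entry), and the same extraction rule $k_h\le(1-f_h)n_h$ at the root. You simply spell out the inductive correctness argument and the ``no further cuts below a severed edge'' observation that the paper leaves implicit.
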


Combining Theorem \ref{const-dem-tree} with Lemma \ref{dem-tree-red}, we see that our approach achieves the following.

\begin{theorem}\label{const-dem-tree-gen}
When $\gamma = O(1)$, we give a $O(\log n)$-approximation algorithm for \proba{}.
\end{theorem}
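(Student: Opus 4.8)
The plan is to compose the two ingredients already established, with no new work required. Theorem~\ref{const-dem-tree} supplies an algorithm that, on any \emph{tree} instance of \proba{}, returns an optimal cut in time $O(n^{2\gamma+1})$; when $\gamma = O(1)$ this is polynomial in $n$. The first step is to observe that an exact algorithm is, trivially, a $(1,1)$-bicriteria algorithm in the sense of Definition~\ref{dem-alg}: the returned cut $F$ has $w(F) \le OPT_{\mathcal{I}}$ (so $\rho = 1$) and satisfies every demographic constraint exactly, i.e. $|\prot(V, E\setminus F, s) \cap V_h| \ge f_h n_h$ for all $h \in [\gamma]$ (so $\alpha = 1$).

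Next I would feed this into Lemma~\ref{dem-tree-red} with $\rho = 1$ and $\alpha = 1$, which immediately produces a $(O(\log n), 1)$-bicriteria algorithm for \proba{} on general graphs. A $(\rho, 1)$-bicriteria algorithm outputs a genuinely feasible solution (no covering fraction $f_h$ is violated) whose cost is within a factor $\rho$ of the optimum; hence such an algorithm is exactly an $O(\log n)$-approximation for \proba{}. The key qualitative point to flag here is that the approximation loss incurred by the tree embedding in Lemma~\ref{dem-tree-red} lands entirely in the cost and not in the demographic guarantees, so feasibility is preserved.

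Finally, I would verify polynomial running time of the composed procedure. The reduction underlying Lemma~\ref{dem-tree-red} invokes Lemma~\ref{racke-lem} to build $k = \poly(n)$ trees efficiently, runs the tree algorithm once on each of the $k$ tree instances, and then selects the cut $\delta(X_m)$ minimizing $w(\delta(X_i))$; each of the $k$ calls costs $O(n^{2\gamma+1})$, which is polynomial for constant $\gamma$, and the remaining bookkeeping is clearly polynomial, so the overall algorithm runs in polynomial time.

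I do not anticipate any genuine obstacle: the statement is a corollary of Theorem~\ref{const-dem-tree} plugged into Lemma~\ref{dem-tree-red}. The only things worth making explicit in the write-up are (i) that ``optimal on trees'' means ``$(1,1)$-bicriteria on trees'', (ii) that $\alpha = 1$ is carried through the reduction so the output is feasible rather than merely approximately feasible, and (iii) that invoking a $O(n^{2\gamma+1})$-time subroutine $\poly(n)$ times still yields polynomial total time precisely because $\gamma$ is assumed constant.
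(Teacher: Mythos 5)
Your proposal is correct and is exactly the paper's argument: the paper obtains Theorem~\ref{const-dem-tree-gen} by combining the optimal dynamic program on trees (Theorem~\ref{const-dem-tree}), viewed as a $(1,1)$-bicriteria algorithm, with the tree-embedding reduction of Lemma~\ref{dem-tree-red}. Your additional remarks on feasibility being preserved ($\alpha=1$) and on the polynomial running time for constant $\gamma$ are accurate and consistent with the paper.
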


\subsection{Solving \normalfont{\proba{}} \textbf{for an Arbitrary $\gamma$}}\label{sec:dem-fair-arb}

Given Lemma \ref{dem-tree-red}, we again focus on instances $\mathcal{I}=\{V,E,s,w,V_1,\hdots, V_{\gamma}, \vec f\}$, where the underlying graph $T=(V,E)$ is a tree. Moreover, we can assume without loss of generality that the tree is rooted at $s$. Before we proceed with the description of our algorithm, we need some more notation. For every $v \in V$ let $P(s,v) \subseteq E$ be the unique path from $s$ to $v$ in the tree, and $\ell(v) = |P(s,v)|$. In addition, for every $e = (u,v) \in E$ let $P_e = P(s,r(e))$, with $r(e) = \argmin_{z \in \{u,v\}} \ell(z)$. In words, $P_e$ contains the edges of the path that starts from $s$ and finishes just before reaching $e$. The following linear program (LP) is then a valid relaxation of our problem.
\begin{align}
\min &\sum_{e \in E}w_e \cdot x_e & \label{LP-1} \\
&y_v = \sum_{e \in P(s,v)}x_e ~&\forall v \in V \label{LP-2}\\
&\sum_{v \in V_h}y_v \geq f_h \cdot n_h ~&\forall h \in [\gamma] \label{LP-3}\\
&0\leq y_v, x_e \leq 1 ~&\forall v \in V, e\in E \label{LP-4}
\end{align}

In the integral version of LP (\ref{LP-1})-(\ref{LP-4}), $x_e = 1$ iff edge $e$ is included in the cut. Now notice that because the underlying graph is a tree and the edge weights are non-negative, for any $v \in V$ the optimal solution would not choose more than one edge from $P(s,v)$. Therefore, by constraints (\ref{LP-2}) and (\ref{LP-4}) we see that $y_v = 1$ iff $v$ is separated from $s$ in the optimal outcome. Consequently, constraint (\ref{LP-3}) naturally captures the demographic covering requirements. 

Our approach begins by solving LP (\ref{LP-1})-(\ref{LP-4}) in order to get a fractional solution $x,y$. We then apply the following dependent randomized rounding scheme. We consider the edges of the tree in non-decreasing order of $|P_e|$, and for an edge $e$ for which no other edge in $P_e$ is already chosen for the cut, we remove it with probability $x_e / (1- x(P_e))$ if $x(P_e) < 1$. The latter action is well-defined because for every $e' \in P_e$ we have $|P_{e'}| < |P_e|$, and hence $e'$ is considered before $e$ in the given ordering. Further, if an edge $e$ is chosen to be placed in the cut, then all $v \in V$ with $e \in P(s,v)$ are now disconnected from $s$. In addition, observe that due to the dependent nature of this process, no path $P(s,v)$ will have more than one edge of it in the solution.

Algorithm \ref{alg-1} demonstrates all necessary details of the rounding, with $X_e$ being an indicator random variable denoting whether or not $e$ is included in the solution, and $Y_v$ an indicator random variable that is $1$ iff $v$ is disconnected from $s$ in the final outcome.

\begin{algorithm}[t]
For every $e \in E$ set $X_e \gets 0$, and for all $v \in V$ set $Y_v \gets 0$\;
\For {all $e \in E$ in non-decreasing order of $|P_e|$} {
\If {$x(P_e) < 1$ and $X_{e'} = 0$ for all $e' \in P_e$} {
Set $X_e \gets 1$ with probability $x_e / (1-x(P_e))$\;
\If {$X_e = 1$} {
Set $Y_v \gets 1$ for all $\{v \in V: ~ e \in P( s,v)\}$\;
}
}
}
\caption{Randomized Rounding for LP (\ref{LP-1})-(\ref{LP-4})}\label{alg-1}
\end{algorithm}

\begin{lemma}\label{valid-dist}
When we randomly decide to include $e \in E$ in the cut, we do so with a valid probability.
\end{lemma}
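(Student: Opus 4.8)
The statement to prove is Lemma~\ref{valid-dist}: whenever Algorithm~\ref{alg-1} decides to include an edge $e$ in the cut, it does so with a valid probability, i.e.\ the quantity $x_e/(1-x(P_e))$ lies in $[0,1]$.

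\medskip

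\textbf{Proof plan.} The plan is to show that whenever the algorithm actually reaches the coin-flip step for an edge $e$, we have $0 \le x_e \le 1 - x(P_e)$, so that $x_e/(1-x(P_e)) \in [0,1]$. The denominator is strictly positive because the algorithm only flips the coin when $x(P_e) < 1$ (this is an explicit condition in the \textbf{if} statement), and non-negativity of $x_e$ is immediate from constraint~\eqref{LP-4}. So the whole content is the upper bound $x_e \le 1 - x(P_e)$, equivalently $x_e + x(P_e) \le 1$, i.e.\ $x(P(s,v)) \le 1$ where $v = r(e)$ is the endpoint of $e$ closer to $s$ together with $e$ itself — more precisely $x(P_e \cup \{e\}) \le 1$. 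But $P_e \cup \{e\} \subseteq P(s,w)$ for the far endpoint $w$ of $e$, and every edge set of the form $P(s,w)$ for $w \in V$ satisfies $x(P(s,w)) = y_w \le 1$ by constraints~\eqref{LP-2} and~\eqref{LP-4}. Since all $x_e \ge 0$, monotonicity then gives $x(P_e) + x_e = x(P_e \cup \{e\}) \le x(P(s,w)) = y_w \le 1$, which is exactly what we need.

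\medskip

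\textbf{Key steps, in order.} First I would fix an edge $e = (u,v)$ with $r(e) = u$ (the endpoint closer to $s$), so that $P_e = P(s,u)$ and $P(s,v) = P(s,u) \cup \{e\} = P_e \cup \{e\}$, using that the graph is a tree rooted at $s$. Second, I would invoke LP constraint~\eqref{LP-2} at $v$ to write $y_v = \sum_{e' \in P(s,v)} x_{e'} = x(P_e) + x_e$. Third, I would invoke~\eqref{LP-4} to conclude $x(P_e) + x_e = y_v \le 1$, hence $x_e \le 1 - x(P_e)$. Fourth, I would note that the algorithm only performs the coin flip inside the branch guarded by $x(P_e) < 1$, so $1 - x(P_e) > 0$ and the fraction $x_e/(1 - x(P_e))$ is well defined; combined with $x_e \ge 0$ from~\eqref{LP-4} and the upper bound just derived, this shows the probability lies in $[0,1]$.

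\medskip

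\textbf{Anticipated obstacle.} This lemma is essentially a one-line consequence of the LP constraints once the tree structure is set up correctly, so there is no real mathematical difficulty. The only thing to be careful about is the bookkeeping around $P_e$ versus $P(s,v)$ versus $P(s,u)$: one must correctly identify which endpoint of $e$ is $r(e)$ and confirm that appending $e$ to $P_e$ yields exactly $P(s,\cdot)$ for the other endpoint, so that an LP constraint of the form~\eqref{LP-2} applies directly. Handling the degenerate-looking boundary cases (e.g.\ $e$ incident to $s$, so $P_e = \emptyset$ and $x(P_e) = 0$) requires no extra work, since the argument is uniform. I would also remark in passing that the strict inequality $x(P_e) < 1$ is genuinely needed only to make the division legal — if $x(P_e) = 1$ the algorithm correctly skips $e$, consistent with the fact that some ancestor edge already carries all the probability mass on that path.
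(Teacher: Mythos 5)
Your proof is correct and follows exactly the same route as the paper's: identify $P(s,v) = P_e \cup \{e\}$ for the far endpoint $v$ of $e$, then apply constraints~\eqref{LP-2} and~\eqref{LP-4} at $v$ to get $x_e + x(P_e) \le 1$, which gives $x_e/(1-x(P_e)) \le 1$. Your additional remarks about the strict positivity of the denominator and the non-negativity of $x_e$ are fine and only make the argument slightly more explicit than the paper's.
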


\begin{proof}
Let $e = (u,v)$, and without loss of generality assume $l(u) < l(v)$. This means that $P_e = P(s,u)$ and $P(s,v) = P(s,u) \cup \{e\}$. In addition, to consider a randomized decision for $e$ we should also have $x(P_e) < 1$. Using constraints (\ref{LP-2}) and (\ref{LP-4}) for $v$ we therefore get:
\begin{align}
    x_e + \sum_{e' \in P_e}x_{e'} \leq 1  \implies  \frac{x_e}{1-x(P_e)} \leq 1 \notag &\qedhere
\end{align}
\end{proof}

\begin{lemma}\label{removal-prob}
For every $e \in E$ and $v \in V $, we have $\Pr[X_e = 1] = x_e$ and $\Pr[Y_v = 1] = y_v$.
\end{lemma}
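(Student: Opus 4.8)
The plan is to fix an arbitrary vertex $v \in V$, write its $s$--$v$ path as $P(s,v) = (e_1, e_2, \ldots, e_\ell)$ with the edges listed in order of increasing distance from $s$, and prove the single stronger statement that for every $j \in \{0,1,\ldots,\ell\}$,
\[
\Pr[X_{e_i} = 0 \text{ for all } i \le j] \;=\; 1 - \sum_{i=1}^{j} x_{e_i}.
\]
Both claimed identities follow from this. The structural fact I would isolate first is that $P_{e_j} = \{e_1, \ldots, e_{j-1}\}$ (the closer endpoint $r(e_j)$ is the $(j-1)$-st vertex on the path), so $|P_{e_i}| < |P_{e_j}|$ whenever $i < j$; hence in Algorithm~\ref{alg-1} the edges $e_1, \ldots, e_{j-1}$ are all processed strictly before $e_j$, and whether the algorithm reaches the coin flip for $e_j$ at all is determined entirely by coin flips made earlier in the execution.

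I would prove the displayed identity by induction on $j$. The base case $j = 0$ is $1 = 1$. For the step, let $A_{j-1}$ denote the event $\{X_{e_i} = 0 \text{ for all } i < j\}$. If $x(P_{e_j}) = 1$, then constraints (\ref{LP-2})--(\ref{LP-4}) applied to $v$ force $x_{e_j} = 0$ and the algorithm never adds $e_j$, so $\Pr[A_j] = \Pr[A_{j-1}]$ and the inductive hypothesis closes this case; so assume $x(P_{e_j}) < 1$. Conditioned on $A_{j-1}$, the algorithm reaches the coin flip for $e_j$ (its predecessors on $P_{e_j}$ are exactly $e_1,\ldots,e_{j-1}$, all of which had $X = 0$, and $x(P_{e_j}) < 1$), and it sets $X_{e_j} = 1$ with probability $x_{e_j}/(1 - x(P_{e_j}))$, a value in $[0,1]$ by Lemma~\ref{valid-dist}, using a coin independent of $A_{j-1}$. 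Therefore $\Pr[A_j] = \Pr[A_{j-1}]\bigl(1 - \tfrac{x_{e_j}}{1 - x(P_{e_j})}\bigr)$, and since $x(P_{e_j}) = \sum_{i<j} x_{e_i}$, the inductive hypothesis $\Pr[A_{j-1}] = 1 - x(P_{e_j})$ yields $\Pr[A_j] = (1 - x(P_{e_j})) - x_{e_j} = 1 - \sum_{i \le j} x_{e_i}$.

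Given the identity, the two conclusions are short. For any $j$ with $x(P_{e_j}) < 1$, we have $\Pr[X_{e_j} = 1] = \Pr[A_{j-1}] \cdot \tfrac{x_{e_j}}{1 - x(P_{e_j})} = x_{e_j}$, and for $j$ with $x(P_{e_j}) = 1$ we have $\Pr[X_{e_j} = 1] = 0 = x_{e_j}$; since every edge $e \in E$ occurs as the last edge of $P(s,v)$ for $v$ its endpoint farther from $s$, this gives $\Pr[X_e = 1] = x_e$ for all $e$. For the vertex statement, recall from the description of Algorithm~\ref{alg-1} that no path $P(s,v)$ ever contains more than one cut edge, so $Y_v = 1$ holds exactly when some $e_i \in P(s,v)$ has $X_{e_i} = 1$; hence $\Pr[Y_v = 1] = 1 - \Pr[X_{e_i} = 0 \text{ for all } i \le \ell] = \sum_{i=1}^{\ell} x_{e_i} = \sum_{e \in P(s,v)} x_e = y_v$ by (\ref{LP-2}).

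The only delicate point I anticipate is the independence bookkeeping in the inductive step — making rigorous that the coin flip for $e_j$ is independent of the event $A_{j-1}$ because every edge relevant to $A_{j-1}$ is processed before $e_j$ — together with cleanly dispatching the degenerate case $x(P_{e_j}) = 1$; everything else is routine algebra.
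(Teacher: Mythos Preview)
Your proof is correct and follows essentially the same approach as the paper: both arguments compute the telescoping product along the root-to-vertex path, you via an explicit induction on the ``survival'' probability $\Pr[A_j]=1-\sum_{i\le j}x_{e_i}$, the paper by writing out the product and observing it telescopes. The only cosmetic difference is that for $\Pr[Y_v=1]$ you take the complement $1-\Pr[A_\ell]$ whereas the paper invokes inclusion--exclusion together with the ``at most one cut edge per path'' property; your route is in fact slightly cleaner since it does not need that property at that step.
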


\begin{proof}
Let us begin with an $e \in E$ for which we never made a random decision because $x(P_e) \geq 1$, and hence $X_e = 0$. If $e = (u,v)$ with $l(u) < l(v)$, then $P_e = P(s,u)$ and $P(s,v) = P(s,u) \cup \{e\}$. Because of constraints (\ref{LP-2}) and (\ref{LP-4}) for $u$ we first get $x(P_e) = 1$. Therefore, constraints (\ref{LP-2}) and (\ref{LP-4}) applied this time for $v$ yield $x_e = 0$, which indeed gives $\Pr[X_e = 1] = x_e$. 

Now let us consider an edge $e$ with $x(P_e) < 1$. Because for each $e' \in P_e$ we have $P_{e'} \subset P_e$, we also get $x(P_{e'}) < 1$. The latter means that for all other edges in $P_e$ a random decision potentially takes place. Furthermore, analysis of the algorithm's actions shows that $\Pr[X_e = 1]$ is equal to
\begin{align}
    &\Pr[X_e = 1 ~|~ X_{e'} = 0 ~\forall e' \in P_e] \cdot \Pr[X_{e'} = 0 ~\forall e' \in P_e] \notag \\
    &= \frac{x_e}{1-\sum_{e' \in P_e}x_{e'}} \prod_{e' \in P_e}\Big{(}1 - \frac{x_{e'}}{1-\sum_{e'' \in P_{e'}}x_{e''}}\Big{)} \label{removal-prob-aux} 
\end{align}
Let $e_1, \hdots, e_m$ the edges of $P_e$ in increasing order of $|P_{e_j}|$. Then because $P_{e_j} = \{e_{j'} ~|~ j' < j\}$, expression (\ref{removal-prob-aux}) can be rewritten as a telescopic product of fractions:
\begin{align}
 \frac{x_e}{1-\sum^m_{j=1}x_{e_j}}\prod^m_{j=1}\Big{(}1- \frac{x_{e_j}}{1-\sum^{j-1}_{i=1}x_{e_i}} \Big{)} = x_e \notag
\end{align}

As for a vertex $v \in V$, we have $\Pr[Y_v = 1] = \Pr[\exists e \in P(s,v): X_e = 1]$ because there is a unique path from $s$ to it. Moreover, since our rounding will never put more than one edges of $P(s,v)$ in the cut, for all $S \subseteq P(s,v)$ with $|S| \geq 2$ we get $\Pr[X_e = 1, \forall e \in S] = 0$. Hence, by the inclusion-exclusion principle $\Pr[\exists e \in P(s,v): X_e = 1] = \sum_{e \in P(s,v)}\Pr[X_e = 1] = \sum_{e \in P(s,v)}x_e = y_v$, where the last equality follows from constraint (\ref{LP-2}).
\end{proof}

We will now analyze the satisfaction of the coverage constraints for the different demographics. If $S_h$ is the number of vertices from $V_h$ that are not connected to $s$ in the solution, we see that $S_h = \sum_{v \in V_h}Y_v$. Using Lemma \ref{removal-prob} and constraint (\ref{LP-3}) gives $\mathbb{E}[S_h] \geq f_h n_h$. We thus need to calculate how much can $S_h$ deviate from $\mathbb{E}[S_h]$. For that we will need the following two lemmas.

\begin{lemma}\label{lem-janson}
\cite{Janson97} Let $Z_1,\hdots,Z_m$ be Bernoulli random variables, where $\Pr[Z_i = 1]=z_i$ for all $i \in [m]$. Let $\Gamma$ be the dependency graph on the $Z_i$. For $i \neq j$, $Z_i$ and $Z_j$ are dependent if there exists an edge between them in $\Gamma$, and we denote that as $i \sim j$. Let also $Z = \sum^m_{i=1}Z_i$, $\mu = \mathbb{E}[Z]$, $\Delta = \sum_{\{i,j\}:i \sim j}\Pr[Z_i = Z_j = 1]$, $\delta_i = \sum_{j \sim i}z_j$ and $\delta = \max_i \delta_i$. Then for any $\epsilon \in [0,1]$ 
\begin{align}
\Pr[Z \leq (1-\epsilon)\mu] \leq \exp \Big{(} -\min \Big{(} \frac{\epsilon^2 \cdot \mu^2}{8\Delta + 2\mu}, \frac{\epsilon \cdot \mu}{6\delta}\Big{)}\Big{)}\notag
\end{align}
\end{lemma}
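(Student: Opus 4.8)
The plan is to prove this by the exponential‑moment (Chernoff--Cram\'er) method applied to the lower tail, using the dependency graph $\Gamma$ to control the moment generating function; this is essentially Janson's argument from \cite{Janson97}. The starting point is Markov's inequality applied to the nonnegative variable $e^{-tZ}$ for a parameter $t>0$:
\[
\Pr[Z \le (1-\epsilon)\mu] \;=\; \Pr\!\big[e^{-tZ}\ge e^{-t(1-\epsilon)\mu}\big] \;\le\; e^{t(1-\epsilon)\mu}\,\mathbb{E}\!\big[e^{-tZ}\big].
\]
Thus the whole problem reduces to an upper bound on the lower‑tail transform $\mathbb{E}[e^{-tZ}]=\mathbb{E}\big[\prod_i e^{-tZ_i}\big]$ valid for all small $t>0$, followed by an optimization over $t$ and routine analytic estimates. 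If the $Z_i$ were independent the transform would factor and give the usual Bernstein/Bennett bound with variance proxy $\mathrm{Var}(Z)\le\mu$; the work is to show that the limited dependence inflates this only through the pairwise‑collision term $\Delta$ (in the ``Gaussian'' regime) or the neighborhood‑weight term $\delta$ (in the ``Poisson'' regime).

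To bound the transform I would study $\psi(t):=\log\mathbb{E}[e^{-tZ}]$, which has $\psi(0)=0$ and $\psi'(t) = -\sum_i \mathbb{E}[Z_i e^{-tZ}]/\mathbb{E}[e^{-tZ}]$. For each $i$, write $Z = Z_i + W_i + U_i$ with $W_i=\sum_{j\sim i}Z_j$ collecting the neighbors and $U_i$ collecting the non‑neighbors; using that $Z_i$ is independent of $\{Z_j : j\not\sim i\}$ together with the crude bounds $0\le e^{-tZ_i},\,e^{-tW_i}\le 1$, one can lower‑bound each ratio by roughly $e^{-t}\,\mathbb{E}[Z_i\,\mathbf{1}\{W_i=0\}]$, which by a union bound is at least $z_i - \sum_{j\sim i}\Pr[Z_i=Z_j=1]$. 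Summing over $i$ and integrating in $t$ should give $\psi(t) \le -(\mu - 2\Delta)(1-e^{-t})$. For the regime where $\Delta$ is large but the neighborhoods are light, I would instead bound $e^{-tW_i}$ in terms of the number of \emph{active} neighbors of $i$, whose conditional mean given $Z_i=1$ is at most $\delta$, and use Markov‑type tail bounds on $W_i$ to get a complementary estimate scaling with $\delta$ rather than $\Delta$.

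Finally I would substitute each transform estimate back into the Markov bound and optimize over $t$ (for the first estimate, roughly $e^{-t} = (1-\epsilon)\mu/(\mu-2\Delta)$ when this is at most $1$); using elementary inequalities such as $1-e^{-t}\ge t-t^2/2$ and $\log(1+x)\le x$, the first estimate yields the branch $\exp\!\big(-\epsilon^2\mu^2/(8\Delta+2\mu)\big)$ and the second yields $\exp\!\big(-\epsilon\mu/(6\delta)\big)$, and since both hold we may take whichever is smaller, which is exactly the stated minimum. I expect the genuine obstacle to be the dependency‑handling step --- making the collision cost come out to the additive $\Delta$ (and not, say, the number of edges of $\Gamma$), and in the complementary regime arguing that numerous‑but‑light neighborhoods behave like a single Poisson increment of size $\delta$, together with pinning down the constants $8$, $2$ and $6$. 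Since the inequality is quoted verbatim from \cite{Janson97}, I would lean on that reference for these quantitative details rather than re‑derive every factor.
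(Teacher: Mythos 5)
This lemma is imported verbatim from \cite{Janson97}; the paper supplies no proof of it, so there is nothing internal to compare your attempt against --- citing the reference, as the paper does, is the intended ``proof.'' Your sketch does correctly identify the family of techniques behind the result: a Chernoff--Cram\'er bound on the lower tail via the Laplace transform $\mathbb{E}[e^{-tZ}]$, with the dependency graph controlling how far the transform can deviate from the independent case. That is indeed the skeleton of Janson's argument (which descends from Suen's correlation inequality), and your observation that the final $\min$ arises from two different regimes of the optimization over $t$ is on target.

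That said, as a proof the proposal has real gaps beyond ``pinning down constants.'' First, the central estimate you posit, $\psi(t)\le -(\mu-2\Delta)(1-e^{-t})$, is asserted rather than derived, and the heuristic you give for it (bounding $\mathbb{E}[Z_i e^{-tZ}]/\mathbb{E}[e^{-tZ}]$ below by $e^{-t}\,\mathbb{E}[Z_i\mathbf{1}\{W_i=0\}]$ and a union bound) discards the correlations in a way that does not obviously close: the denominator $\mathbb{E}[e^{-tZ}]$ and the conditioning on the neighborhood interact, and Janson's actual derivation handles this through an ordered product expansion of $\prod_i(1-\theta Z_i)$ rather than by differentiating $\psi$ and bounding each ratio separately. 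Second, the mechanism you describe for the $\delta$ branch (a separate Markov-type tail bound on $W_i$) is not how that term enters; in the reference the neighborhood weight appears as a multiplicative factor of the form $e^{O(t\delta)}$ attached to the $\Delta$ term in the exponent, and the $\epsilon\mu/(6\delta)$ branch comes from restricting $t\lesssim 1/\delta$ so that this factor stays bounded. So the two branches are coupled through a single transform estimate, not obtained from two independent arguments. None of this undermines the lemma --- it is a correct, citable result --- but your write-up should either carry out these steps or, as the paper does, simply invoke \cite{Janson97} without presenting a proof sketch that papers over the genuinely delicate parts.
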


\begin{lemma}\label{aux-lem}
For every $m \in \mathbb{N}_{>0}$ and some sequence of non-negative numbers $a_1, a_2, \hdots$ we have:
$$\sum^{m-1}_{i=1}(m-i)a_i \leq m\sum^{m}_{i=1}a_i$$
\end{lemma}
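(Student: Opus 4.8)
The plan is to prove this by a simple termwise comparison, since the inequality is really just an artifact of bounding the coefficients $(m-i)$ crudely. First I would observe that the left-hand side only ranges over $i \in \{1,\dots,m-1\}$, whereas the right-hand side ranges over $i \in \{1,\dots,m\}$, so it suffices to show $\sum_{i=1}^{m-1}(m-i)a_i \le m\sum_{i=1}^{m-1}a_i$ and then append the nonnegative term $m a_m \ge 0$ to the right-hand side.

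For the remaining inequality, I would note that for every index $i \ge 1$ we have $0 \le m-i \le m$, and since each $a_i \ge 0$, multiplying through gives $(m-i)a_i \le m\,a_i$ for each $i \in \{1,\dots,m-1\}$. Summing these $m-1$ inequalities yields
\begin{align}
\sum_{i=1}^{m-1}(m-i)a_i \le \sum_{i=1}^{m-1} m\, a_i = m\sum_{i=1}^{m-1} a_i \le m\sum_{i=1}^{m} a_i, \notag
\end{align}
where the last step uses $a_m \ge 0$. This completes the argument.

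There is no real obstacle here: the only thing to be careful about is the mismatch in summation ranges (the extra term $a_m$ on the right), and the hypothesis that the $a_i$ are nonnegative, which is exactly what licenses both the termwise bound $(m-i)a_i \le m a_i$ and the final step of dropping down from $\sum_{i=1}^{m}$ to $\sum_{i=1}^{m-1}$. I would state the lemma's use of nonnegativity explicitly so the reader sees where each hypothesis is consumed.
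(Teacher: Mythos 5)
Your proof is correct, but it takes a genuinely different route from the paper's. The paper proves the lemma by induction on $m$: it assumes the bound for $m=k$, expands $\sum_{i=1}^{k}(k+1-i)a_i$ as $\sum_{i=1}^{k}(k-i)a_i + \sum_{i=1}^{k}a_i$, applies the inductive hypothesis to the first sum, and then uses $a_{k+1}\geq 0$ to pass to the full range. Your argument instead bounds each coefficient directly via $0 \leq m-i \leq m$ and sums the termwise inequalities $(m-i)a_i \leq m\,a_i$, handling the range mismatch by appending the nonnegative term $m\,a_m$. Your approach is shorter and more transparent about why the lemma holds --- it makes explicit that the inequality is just a crude bound on the coefficients (indeed your argument shows the sharper constant $m-1$ would also work on the right-hand side), and it isolates exactly where nonnegativity of the $a_i$ is used. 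The induction in the paper buys nothing here beyond matching a common template; there is no gap in either argument, and nothing in the later application (the bound on $\Delta$ in Lemma~\ref{iter-bound}) depends on which proof is used.
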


\begin{proof}
We prove the statement via induction on $m$. For $m=1$ it is trivial. Suppose that the lemma holds up to some $m = k$. We then prove it for $m=k+1$:
\begin{align}
    \sum^{k+1-1}_{i=1}(k+1-i)a_i &= \sum^{k}_{i=1}\Big{(}(k-i)a_i + a_i\Big{)} = \sum^{k}_{i=1}(k-i)a_i + \sum^{k}_{i=1}a_i \notag \\& = \sum^{k-1}_{i=1}(k-i)a_i + \sum^{k}_{i=1}a_i \leq k\sum^{k}_{i=1}a_i + \sum^{k}_{i=1}a_i \notag \\ &\leq (k + 1)\sum^{k}_{i=1}a_i \leq (k + 1)\sum^{k+1}_{i=1}a_i \notag
\end{align}
The first inequality uses the inductive hypothesis, while the last one the fact that $a_{k+1} \geq 0$.
\end{proof}

\begin{lemma}\label{iter-bound}
For all $h \in [\gamma]$ and any $\epsilon \in [0,1]$, we have $\Pr[S_h \leq (1-\epsilon)\mathbb{E}[S_h]] \leq e^\frac{-\epsilon^2 \cdot f_h}{10}$.
\end{lemma}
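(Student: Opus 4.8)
The plan is to apply Janson's inequality (Lemma~\ref{lem-janson}) to the indicators $\{Y_v\}_{v \in V_h}$, whose sum is $S_h$. First I would record the elementary facts: by Lemma~\ref{removal-prob} and constraint~(\ref{LP-3}) we have $\mu := \mathbb{E}[S_h] = \sum_{v \in V_h} y_v \geq f_h n_h$, which is positive (we may assume $n_h \geq 1$). The substance of the proof lies in exhibiting a legitimate dependency graph for the $Y_v$'s and then controlling the quantities $\Delta$ and $\delta$ appearing in Lemma~\ref{lem-janson}.

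For the dependency structure I would argue that $Y_v$ is a deterministic function of the independent coin flips used at the edges of $P(s,v)$ \emph{only}. Indeed, when Algorithm~\ref{alg-1} processes an edge $e \in P(s,v)$, its parent path $P_e = P(s,r(e))$ is contained in $P(s,v)$, so whether $X_e = 1$ depends only on the fixed quantity $x(P_e)$, a fresh coin at $e$, and the already-determined values $X_{e'}$ for $e' \in P_e \subseteq P(s,v)$; moreover, processing any edge \emph{not} on $P(s,v)$ never sets $Y_v$, since $Y_v$ is turned on only when an edge of $P(s,v)$ is selected. By induction on depth, $Y_v$ therefore depends only on the coins along $P(s,v)$. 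Consequently, whenever $P(s,u) \cap P(s,v) = \emptyset$ the variables $Y_u$ and $Y_v$ are independent, and more generally $Y_v$ is jointly independent of any family $\{Y_w\}$ indexed by vertices whose root-paths are edge-disjoint from $P(s,v)$. Hence it is legitimate to take the dependency graph $\Gamma$ on $V_h$ to have $u \sim v$ exactly when $P(s,u)$ and $P(s,v)$ share an edge, equivalently when the least common ancestor of $u$ and $v$ is not $s$.

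Next I would bound the two parameters. Since every $\Gamma$-neighborhood is contained in $V_h$, we get $\delta_u = \sum_{v \sim u} y_v \leq \sum_{v \in V_h} y_v = \mu$, so $\delta \leq \mu$ and the second term in Janson's exponent is $\frac{\epsilon \mu}{6\delta} \geq \frac{\epsilon}{6} \geq \frac{\epsilon^2 f_h}{10}$, using $\epsilon \leq 1$ and $f_h \leq 1$. For $\Delta$, use the trivial bound $\Pr[Y_u = Y_v = 1] \leq \Pr[Y_u = 1] = y_u$ (valid for either endpoint of a $\Gamma$-edge); summing over the edges of $\Gamma$ and using that each vertex has fewer than $n_h$ neighbors, we obtain $\Delta = \sum_{\{u,v\} : u \sim v} \Pr[Y_u = Y_v = 1] \leq n_h \sum_{v \in V_h} y_v = n_h \mu$ (this can be sharpened by charging each dependent pair to its least common ancestor and invoking Lemma~\ref{aux-lem} on root-path weights, but that refinement is not needed here). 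Since $n_h \geq 1$ this gives $8\Delta + 2\mu \leq 10 n_h \mu$, so the first term in Janson's exponent is $\frac{\epsilon^2 \mu^2}{8\Delta + 2\mu} \geq \frac{\epsilon^2 \mu}{10 n_h} \geq \frac{\epsilon^2 f_h}{10}$, where the last step uses $\mu \geq f_h n_h$.

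Combining the two bounds, Lemma~\ref{lem-janson} gives $\Pr[S_h \leq (1-\epsilon)\mu] \leq \exp(-\min(\tfrac{\epsilon^2 f_h}{10}, \tfrac{\epsilon^2 f_h}{10})) = e^{-\epsilon^2 f_h/10}$, which is the claim. I expect the main obstacle to be the second step: rigorously justifying that the ``paths share an edge'' graph is a valid dependency graph in the sense Janson requires, i.e.\ the independence argument expressing $Y_v$ through the coins on $P(s,v)$, together with pinning down the bound on $\Delta$; once those are in place the rest is routine arithmetic with the LP constraints. A minor point to dispose of en route is that some edges never trigger a random choice (those with $x(P_e) \geq 1$), but these behave deterministically and cause no trouble, exactly as already handled in Lemma~\ref{removal-prob}.
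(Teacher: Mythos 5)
Your proof is correct and follows essentially the same route as the paper: apply Janson's inequality (Lemma~\ref{lem-janson}) to $S_h=\sum_{v\in V_h}Y_v$, establish $\Delta\le n_h\,\mathbb{E}[S_h]$ and $\delta\le \mathbb{E}[S_h]$, and use $\mathbb{E}[S_h]\ge f_h n_h$ to extract the exponent $\epsilon^2 f_h/10$. The only differences are cosmetic: the paper sidesteps identifying the true dependency graph by conservatively treating all pairs in $V_h$ as dependent (which is harmless, exactly as your coarse bounds also ignore the graph structure), and it derives $\Delta\le n_h\mu$ via $\Pr[Y_u=Y_v=1]\le\min(y_u,y_{v})$ combined with the sorting inequality of Lemma~\ref{aux-lem}, whereas your degree-counting charge reaches the same bound a bit more directly.
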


\begin{proof}
Due to Lemma \ref{removal-prob}, the random variables $Y_v$ for $v \in V_h$ are Bernoulli with $\Pr[Y_v = 1] = y_v$. Because of the tree structure they are also to some extent dependent. Our goal here is to apply Lemma \ref{lem-janson} for $S_h$, and towards that end we need to upper bound the dependency factors $\delta, \Delta$. Since we do not know exactly the underlying dependency graph $\Gamma$, in what follows we assume that all pairs $Y_v, Y_{v'}$ are dependent. We begin by upper-bounding the parameter $\Delta$ of Lemma \ref{lem-janson}.
\begin{align}
    \Delta &\leq \sum_{\{v,v'\} \in V_h}\Pr[Y_v = Y_{v'} = 1] \notag \\ &\leq \sum_{\{v,v'\} \in V_h}\min (\Pr[Y_v = 1], \Pr[Y_{v'} = 1])\notag \\ &= \sum_{\{v,v'\} \in V_h}\min (y_v, y_{v'}) \notag
\end{align}
Now let $a_1, a_2, ..., a_{n_h}$ be the values $y_v$ for all $v \in V_h$ in non-decreasing order. Then we have:
\begin{align}
  \sum_{\{v,v'\} \in V_h}\min (y_v, y_{v'}) &= \sum^{n_h-1}_{i=1}(n_h - i)a_i \leq n_h\sum^{n_h}_{i=1}a_i = n_h \cdot \mathbb{E}[S_h] \notag 
\end{align}
To get the first inequality we used Lemma \ref{aux-lem}. Therefore, we get $\Delta \leq n_h \cdot \mathbb{E}[S_h]$. Moreover, a straightforward upper bound for each $\delta_v$ is $\delta_v \leq \sum_{u \in V_h}y_u = \mathbb{E}[S_h]$. Thus, $\delta \leq \mathbb{E}[S_h]$. Finally, we also need bounds for the following two quantities, where $\mu = \mathbb{E}[S_h]$:
\begin{align}
    \frac{\epsilon^2 \cdot \mu^2}{8\Delta + 2\mu} &\geq \frac{\epsilon^2 \cdot \mu^2}{8\mu \cdot n_h + 2\mu} = \frac{\epsilon^2 \cdot \mu}{8n_h + 2}\geq \frac{\epsilon^2 \cdot n_h \cdot f_h}{8n_h + 2} \geq \frac{\epsilon^2 \cdot f_h}{10} \notag 
\end{align}
\begin{align}
    \frac{\epsilon \cdot \mu}{6\delta} \geq \frac{\epsilon \cdot \mu}{6\mu} = \frac{\epsilon}{6} \notag
\end{align}

Since $\frac{\epsilon}{6} \geq \frac{\epsilon^2 \cdot f_h}{10}$ for any $\epsilon, f_h \in [0,1]$, Lemma~\ref{lem-janson} immediately gives the desired bound.
\end{proof}

To conclude, for some constant $\beta \geq 2$ we repeat Algorithm \ref{alg-1} independently $N= \frac{10\log \gamma^\beta}{\epsilon^2 \cdot \min_{h}f_h}$ times, and in each run $t$ of it (with $t \in [N]$) we compute a set of edges $F_t$ that are chosen to be removed. Our final solution is set to be $F = \bigcup_t F_t$. Then we have the following.

\begin{theorem}\label{dem-overall-bound}
For \proba{} in trees and any $\epsilon \in (0,1)$, we give an $\mleft(O\mleft(\frac{\log \gamma}{ \epsilon^2 \min_h f_h}\mright), 1-\epsilon \mright)$-bicriteria algorithm that runs in expected polynomial time.
\end{theorem}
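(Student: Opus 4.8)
The plan is to analyze the algorithm already described just above the statement, wrapped in a Las Vegas outer loop. Concretely: solve LP \eqref{LP-1}--\eqref{LP-4} once to get $(x,y)$, run Algorithm \ref{alg-1} independently $N=\frac{10\log\gamma^\beta}{\epsilon^2\min_h f_h}$ times (with $\beta$ a suitable constant, say $\beta=3$; here $\log$ is the natural logarithm, and the case $\gamma=1$ is just \probd{} and is handled by Theorem \ref{const-dem-tree}) to obtain cut-sets $F_1,\dots,F_N$, and output $F=\bigcup_t F_t$; then repeat the whole $N$-fold rounding until the produced $F$ passes the two polynomial-time checks described below. The crucial structural observation is that adding edges to a cut only disconnects more vertices, so $\prot(V,E\setminus F,s)\supseteq\bigcup_t \prot(V,E\setminus F_t,s)$; hence it suffices to control each run and each demographic separately and then combine. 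I will also repeatedly use that LP \eqref{LP-1}--\eqref{LP-4} is a valid relaxation, so $\mathrm{OPT}_{LP}\le\mathrm{OPT}$.

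For the cost, Lemma \ref{removal-prob} gives, in a single run $t$, $\mathbb{E}[w(F_t)]=\sum_{e}w_e\Pr[X_e=1]=\sum_e w_e x_e=\mathrm{OPT}_{LP}$. Linearity over the $N$ independent runs yields $\mathbb{E}[w(F)]\le\sum_t\mathbb{E}[w(F_t)]=N\cdot\mathrm{OPT}_{LP}$, and Markov's inequality gives $\Pr[w(F)>3N\cdot\mathrm{OPT}_{LP}]\le 1/3$. Since $\mathrm{OPT}_{LP}\le\mathrm{OPT}$, the event $w(F)\le 3N\cdot\mathrm{OPT}_{LP}$ implies $w(F)\le 3N\cdot\mathrm{OPT}=O\!\big(\tfrac{\log\gamma}{\epsilon^2\min_h f_h}\big)\mathrm{OPT}$ (the constant $\beta$ being absorbed), and this event is checkable in polynomial time because $\mathrm{OPT}_{LP}$ is computable by solving the LP.

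For the coverage, fix a demographic $h$ and a run $t$ and set $S_h^{(t)}=\sum_{v\in V_h}Y_v$. By Lemma \ref{removal-prob} and constraint \eqref{LP-3}, $\mathbb{E}[S_h^{(t)}]\ge f_h n_h$, so $\{S_h^{(t)}<(1-\epsilon)f_h n_h\}\subseteq\{S_h^{(t)}\le(1-\epsilon)\mathbb{E}[S_h^{(t)}]\}$, which by Lemma \ref{iter-bound} has probability at most $e^{-\epsilon^2 f_h/10}\le e^{-\epsilon^2\min_h f_h/10}$. By independence of the $N$ runs, $\Pr[\max_t S_h^{(t)}<(1-\epsilon)f_h n_h]\le e^{-N\epsilon^2\min_h f_h/10}=\gamma^{-\beta}$, and a union bound over the $\gamma$ demographics gives that with probability at least $1-\gamma^{1-\beta}\ge 3/4$ every $h$ has some run $t$ with $S_h^{(t)}\ge(1-\epsilon)f_h n_h$, whence $|\prot(V,E\setminus F,s)\cap V_h|\ge(1-\epsilon)f_h n_h$ for all $h$. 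This is checkable in polynomial time by computing the connected component of $s$ in $G_F$.

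Finally, a union bound over the two bad events shows that one execution of the $N$-fold rounding produces an $F$ passing both checks with probability at least $1-1/3-1/4>0$, a constant, so the repeat-until-success loop terminates after $O(1)$ iterations in expectation; since the LP is solved once and each iteration runs Algorithm \ref{alg-1} a polynomial number of times, the whole procedure runs in expected polynomial time and always outputs a valid $\big(O(\tfrac{\log\gamma}{\epsilon^2\min_h f_h}),1-\epsilon\big)$-bicriteria solution. The one genuinely delicate point — and the reason the rather weak per-run coverage bound of Lemma \ref{iter-bound} (only $e^{-\Theta(\epsilon^2 f_h)}$, nowhere near high probability) suffices — is the monotonicity observation that the protected set of $F=\bigcup_t F_t$ contains the union of the protected sets, so a demographic is violated only if it is violated in all $N$ independent runs; this amplification is exactly what $N=\Theta\!\big(\tfrac{\log\gamma}{\epsilon^2\min_h f_h}\big)$ is chosen to exploit, turning the single-shot bound into the $\gamma^{-\beta}$ tail needed for the union bound.
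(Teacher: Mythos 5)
Your proposal is correct and follows essentially the same route as the paper's own proof: the same $N$-fold independent repetition of Algorithm~\ref{alg-1} with $F=\bigcup_t F_t$, the same per-run tail bound from Lemma~\ref{iter-bound} amplified by independence to $\gamma^{-\beta}$, the same union bound over demographics, Markov's inequality against $N\cdot\mathrm{OPT}_{LP}$ for the cost, and a Las Vegas outer loop. Your write-up is in fact slightly more careful than the paper's in two minor respects (making explicit the monotonicity $\prot(V,E\setminus F,s)\supseteq\bigcup_t\prot(V,E\setminus F_t,s)$ and the polynomial-time checkability that justifies the repeat-until-success loop, and handling $\gamma=1$ separately so the union bound is not vacuous), but these are refinements of the identical argument.
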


\begin{proof}
Focus on a specific demographic $h$, and let $S^t_h$ the random variable denoting the number of nodes of $V_h$ separated from $s$ in $(V,E \setminus F_t)$. By Lemma \ref{iter-bound} and the independent nature of the runs:
\begin{align}
\Pr\big{[}S^t_h \leq (1-\epsilon)\mathbb{E}[S^t_h], ~\forall t\big{]} \leq e^\frac{-\epsilon^2 \cdot N \cdot f_h}{10} \leq \frac{1}{\gamma^{\beta}} \notag
\end{align}
Thus, because $\mathbb{E}[S^t_h] \geq f_h  n_h$ for all $t$, we have $$\Pr\Big{[}\big{|}V_h \cap \prot(V,E \setminus F,s)\big{|} \geq (1-\epsilon)f_h  n_h\big{]} \geq \Pr\big{[}\exists t: S^t_h > (1-\epsilon)\mathbb{E}[S^t_h]\Big{]} \geq 1-\frac{1}{\gamma^{\beta}}$$
A union bound over all demographics would finally give $$\Pr\Big{[}\big{|}V_h \cap \prot(V,E \setminus F,s)\big{|} \geq (1-\epsilon)f_h  n_h, ~ \forall h \in [\gamma]\Big{]} \geq 1-\frac{1}{\gamma^{\beta-1}}$$

By Lemma \ref{removal-prob}, in each run an edge $e$ gets removed with probability $x_e$. Hence, with a union bound over all runs, the probability that $e$ gets removed is at most $N x_e$. Therefore, the total expected cost of our algorithm is $N \sum_{e \in E}w_e x_e$, and since LP (\ref{LP-1})-(\ref{LP-4}) is a valid relaxation of the problem, we immediately get the desired approximation ratio on expectation. By Markov's inequality we can further prove that with probability at most $\frac{1}{c}$, we get a final cut of cost more than $cN \sum_{e \in E}w_e x_e$ for some constant $c > 1$.

Thus, with constant probability our algorithm satisfies both the ratio of $O(\frac{\log \gamma}{ \epsilon^2 \min_h f_h})$, and the $1-\epsilon$ approximate satisfaction of the demographic constraints (specifically we fail to satisfy both of the above with probability at most $1/\gamma^{\beta -1} + 1/c$). Hence, repeating the whole process an expected logarithmic number of times, guarantees that we hit both targets deterministically.
\end{proof}

By combining Theorem \ref{dem-overall-bound} and Lemma \ref{dem-tree-red}, we see that our approach achieves the following.

\begin{theorem}\label{dem-overall-bound-gen}
For any given constant $\epsilon \in (0,1)$, we provide an $\mleft(O\mleft(\frac{\log n \log \gamma}{ \epsilon^2 \cdot \min_h f_h}\mright), 1-\epsilon \mright)$-bicriteria algorithm for \proba{}, which also runs in expected polynomial time.
\end{theorem}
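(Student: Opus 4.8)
The plan is to obtain Theorem~\ref{dem-overall-bound-gen} as a direct corollary of Theorem~\ref{dem-overall-bound} and Lemma~\ref{dem-tree-red}. Theorem~\ref{dem-overall-bound} supplies, for any fixed $\epsilon \in (0,1)$, an algorithm that on tree instances of \proba{} is $(\rho, 1-\epsilon)$-bicriteria with $\rho = O\mleft(\frac{\log \gamma}{\epsilon^2 \min_h f_h}\mright)$. Lemma~\ref{dem-tree-red} converts any $(\rho,\alpha)$-bicriteria algorithm for \proba{} in trees into a $(\rho \cdot O(\log n), \alpha)$-bicriteria algorithm for \proba{} in general graphs: given a general instance, we build the tree collection of Lemma~\ref{racke-lem}, run the tree algorithm on each $\mathcal{I}_i$, and output $\delta(X_m)$ for the tree index $m$ minimizing $w(\delta(X_i))$. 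Plugging $\rho = O\mleft(\frac{\log \gamma}{\epsilon^2 \min_h f_h}\mright)$ and $\alpha = 1-\epsilon$ into this statement yields exactly the claimed $\mleft(O\mleft(\frac{\log n \log \gamma}{\epsilon^2 \min_h f_h}\mright), 1-\epsilon\mright)$ guarantee, since $\rho \cdot O(\log n) = O\mleft(\frac{\log n \log \gamma}{\epsilon^2 \min_h f_h}\mright)$.

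The only wrinkle is the running time. Lemma~\ref{dem-tree-red} is phrased with a deterministic subroutine in mind, whereas the tree algorithm of Theorem~\ref{dem-overall-bound} is randomized with \emph{expected} polynomial running time. This causes no trouble: the reduction invokes the tree subroutine once per tree, and there are only $k = \poly(n)$ trees, after which it does a polynomial amount of additional work (evaluating each $w(\delta(X_i))$ in the original graph and taking the minimum). By linearity of expectation, a sum of $\poly(n)$ expected-polynomial-time computations is again expected polynomial time, so the composed algorithm runs in expected polynomial time. Likewise, the bicriteria guarantees for each tree instance are all that is fed into the cost and feasibility analysis of Lemma~\ref{dem-tree-red}, and Theorem~\ref{dem-overall-bound} already outputs a genuine (deterministically valid) bicriteria solution on each tree after its internal expected-logarithmically-many repetitions, so no extra probabilistic slack is introduced by the reduction itself.

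I do not anticipate a real obstacle here, as the argument is just a substitution plus a bookkeeping remark about expected running time. The one thing worth double-checking is that the cost comparison inside Lemma~\ref{dem-tree-red} does not rely on \emph{exact} satisfaction of the demographic constraints by the tree solutions: it does not, because the benchmark there is the optimal exact solution $X^*$ of the general instance, whose projection $\delta_i(X^*)$ satisfies all $\gamma$ covering constraints exactly and is therefore feasible for each tree instance $\mathcal{I}_i$ --- which is precisely what is needed to bound $OPT_{\mathcal{I}_i}$ and hence the cost of the returned cut. Combining these observations completes the proof.
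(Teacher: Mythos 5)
Your proposal is correct and matches the paper exactly: the paper derives Theorem~\ref{dem-overall-bound-gen} precisely by plugging the $\mleft(O\mleft(\frac{\log\gamma}{\epsilon^2\min_h f_h}\mright),1-\epsilon\mright)$ tree algorithm of Theorem~\ref{dem-overall-bound} into the reduction of Lemma~\ref{dem-tree-red}. Your extra bookkeeping --- that a $\poly(n)$-fold composition of expected-polynomial-time subroutines remains expected polynomial time, and that the cost analysis in Lemma~\ref{dem-tree-red} only needs $\delta_i(X^*)$ to be feasible for each tree instance rather than exact constraint satisfaction by the tree solutions --- is sound and, if anything, more careful than the paper's one-line justification.
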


\subsubsection{Hardness of \normalfont{\proba{}} \textbf{with Arbitrary $\gamma$}}

Here we show that even in tree instances, \proba{} with arbitrary $\gamma$ is hard. Specifically, we use a reduction from \textsc{Set Cover}.\\

\noindent \textsc{Set Cover}: We are given a universe of elements $U$ and a collection of $m$ sets $\{S_1, S_2, \hdots, S_m\}$, where $S_i \subseteq U$ for every $i \in [m]$. The goal is to find $C \subseteq [m]$, such that $\bigcup_{i \in C}S_i = U$ and $|C|$ is minimized.

\begin{theorem}[\cite{dinur14}]\label{set-cover}
It is NP-hard to approximate Set Cover instances of universe size $n$ and $m \leq \poly(n)$ sets within a factor better than $\ln n$.
\end{theorem}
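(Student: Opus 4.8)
The plan is to prove this inapproximability bound by a two-stage reduction: first invoke the PCP theorem to obtain a gap constraint-satisfaction problem, and then compose it with a purely combinatorial covering gadget. Concretely, I would start from a two-prover one-round game (equivalently, a Label Cover instance) for which it is NP-hard to distinguish the case where the game has value $1$ from the case where its value is at most some fixed constant $\rho < 1$. From such a game one manufactures a Set Cover instance whose minimum cover is ``cheap'' in the completeness (value-$1$) case and a $\ln n$ factor more expensive in the soundness (value-$\le\rho$) case. The entire task then reduces to engineering this multiplicative gap to be $(1-o(1))\ln n$ while keeping the universe size $n$ and the number of sets $m$ polynomially bounded, so that the hardness is stated against the correct size parameter.

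The combinatorial heart of the construction is a \emph{partition system}: a ground set $B$ of $m$ points carrying $L$ partitions $P_1,\dots,P_L$, each splitting $B$ into $k$ blocks, with the crucial property that the $k$ blocks of any single partition already cover $B$, yet any covering family using at most one block per partition must contain at least $(1-o(1))\,k\ln m$ blocks. I would establish existence of such a system by the probabilistic method: assign each point of $B$ to a uniformly random block within each partition, note that a fixed small family of ``spread-out'' blocks misses a given point with probability bounded away from covering, and finish by a union bound over all sufficiently small families, choosing $m$ polynomially related to $L$ and $k$. This gadget is exactly what converts an \emph{inconsistency} penalty in the underlying game into a \emph{multiplicative} $\ln m$ blowup in covering cost: a ``consistent'' cover uses the $k$ blocks of one partition, whereas an ``inconsistent'' cover is forced to spend a $\ln m$ factor more.

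For the reduction itself I would take the Set Cover universe to be (game questions) $\times$ $B$, and introduce one set for each triple (prover, question, answer), so that selecting the answer prescribed by a consistent strategy for a fixed question corresponds to choosing the blocks of a single partition. In the completeness case, a globally consistent value-$1$ strategy picks, for every question, a matching answer, and the partition structure is arranged so that these selections cover the universe cheaply (using essentially whole partitions). In the soundness case I would argue the contrapositive: a cover that is too cheap must, averaged over questions, use few blocks per question in a way the partition-system property forbids unless those blocks come almost entirely from one partition; decoding one such block per question back into prover answers yields a strategy whose success probability exceeds $\rho$, contradicting soundness. The partition property is precisely the lever forcing the $\ln m = (1-o(1))\ln n$ separation between the two cases.

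The main obstacle—and the exact point where NP-hardness (rather than a weaker assumption) is won or lost—is the trade-off between the achievable gap and the resulting instance size. Driving the game value down to the constant $\rho$ needed for a tight $\ln n$ separation requires amplification via parallel repetition, but a naive number of repetitions is super-constant and blows the Set Cover instance up to quasi-polynomial size, which would only yield hardness under $\mathrm{NP}\not\subseteq\mathrm{DTIME}(n^{O(\log\log n)})$; this is the classical Feige barrier. The hard part is therefore a quantitatively sharp parallel-repetition (or direct-product) bound that pushes the game value down fast enough to keep the number of repetitions, and hence the final universe size $n$, polynomial. It is exactly this analytic strengthening of parallel repetition, fed into the partition-system reduction above, that upgrades the conclusion to NP-hardness with the tight constant $1$ in front of $\ln n$.
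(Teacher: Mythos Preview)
The paper does not prove this theorem at all: it is stated as a known result and attributed to \cite{dinur14}, then used as a black box in the reduction that follows. So there is no ``paper's own proof'' to compare against; the paper's approach is simply to cite the literature.

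Your sketch is a faithful high-level outline of how the cited result is actually established. The Feige-style partition-system gadget combined with Label Cover hardness is exactly the right framework, and you have correctly identified the crux: the classical Feige argument needed quasi-polynomial blowup because driving the game value low enough required too many repetitions, and the Dinur--Steurer contribution is precisely a parallel-repetition theorem for projection games whose decay rate is strong enough to keep the number of repetitions (and hence the Set Cover instance size) polynomial. One minor sharpening: the Dinur--Steurer bound is specifically for \emph{projection} games, and the rate they obtain is roughly $\mathrm{val}(G^{\otimes k}) \le \mathrm{val}(G)^{\Omega(k/\log|\Sigma|)}$ or equivalently an exponential decay in $k$ that does not lose too much in the alphabet size; this is what permits a polynomial-size reduction while still reaching the $(1-o(1))\ln n$ threshold. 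Your proposal captures this correctly in spirit, though of course turning it into a full proof requires the substantial analytic machinery of that paper.
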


This allows us to prove the following theorem.

\begin{theorem}
It is NP-hard to approximate \proba{} with arbitrary $\gamma$ on tree instances within a factor better than $\ln \gamma$.
\end{theorem}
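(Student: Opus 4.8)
The plan is to give an approximation-factor preserving reduction from \textsc{Set Cover} to \proba{} restricted to tree instances, engineered so that the number of demographics $\gamma$ is \emph{exactly} the universe size $n$; then Theorem~\ref{set-cover} transfers essentially verbatim.

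\textbf{The construction.} Given a \textsc{Set Cover} instance with universe $U$, $|U| = n$ (we may assume every element lies in some set, else the instance is infeasible), and sets $S_1,\dots,S_m$ with $m \le \poly(n)$, build a tree rooted at the source $s$ as follows. For each set $S_i$ create a vertex $a_i$ joined to $s$ by an edge of weight $1$; for each element $e \in S_i$ create a leaf $b_{i,e}$ joined to $a_i$ by an edge of weight $1$. Thus cutting the edge $(s,a_i)$ disconnects exactly $a_i$ together with all of $\{b_{i,e} : e \in S_i\}$. Introduce one demographic per element, $V_e = \{\, b_{i,e} : i \in [m],\ e \in S_i \,\}$, so $n_e := |V_e|$ is the number of sets containing $e$ (at least $1$); set $f_e = 1/n_e \in (0,1]$, so the constraint for $V_e$ is simply that at least one vertex of $V_e$ is protected. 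The tree has $1 + m + \sum_i |S_i|$ vertices, polynomial in $n$, and $\gamma = n$.

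\textbf{Equivalence of the optima.} For one direction, a set cover $C \subseteq [m]$ gives the cut $F_C = \{(s,a_i) : i \in C\}$, of cost $|C|$; it is feasible because for each $e$ some $i \in C$ has $e \in S_i$, whence $b_{i,e}$ is disconnected. So $OPT_{\proba} \le OPT_{SC}$. Conversely, let $F$ be any feasible cut and put $C = \{i : (s,a_i) \in F\}$. Every element of $\bigcup_{i\in C} S_i$ already has its constraint met. For an element $e$ \emph{not} covered by $C$, feasibility forces some $b_{j,e}$ with $e \in S_j$ to be disconnected; since $(s,a_j)\in F$ would put $j\in C$ and make $e$ covered, we must have $(a_j,b_{j,e}) \in F$ with $j \notin C$. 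The leaf $b_{j,e}$ determines $e$, so these leaf edges are distinct across uncovered elements and distinct from the $(s,a_i)$-edges, giving $w(F) \ge |C| + |\{e : e\text{ not covered by }C\}|$. Adjoining to $C$ one such set $S_j$ per uncovered element produces a set cover of size at most $w(F)$. Hence $OPT_{SC} \le OPT_{\proba}$, so $OPT_{\proba} = OPT_{SC}$, and any polynomial-time $\rho$-approximation for \proba{} on trees yields one for \textsc{Set Cover} (read $C$ off the returned cut exactly as above). As $\gamma = n$, Theorem~\ref{set-cover} rules out $\rho < \ln\gamma$ unless $\mathrm{P}=\mathrm{NP}$.

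\textbf{Main obstacle.} The only nonroutine step is the converse direction: showing the demographic constraints cannot be met more cheaply by ``local'' cuts (leaf edges) than by genuinely selecting sets. This is exactly the observation that cutting one leaf edge benefits only the single demographic containing that leaf, so it can always be swapped for adding one set to $C$ at no extra cost; the weight-$1$ choice on leaf edges makes this exchange cost-neutral. The remaining bookkeeping — $f_e \in (0,1]$, polynomial instance size, and that the construction is indeed a tree — is immediate.
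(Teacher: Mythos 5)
Your proof is correct and follows the same strategy as the paper: an approximation-factor-preserving reduction from \textsc{Set Cover} with one demographic $V_e$ per universe element and covering fraction $f_e = 1/|V_e|$, so that $\gamma = |U|$ and Theorem~\ref{set-cover} applies directly. The only difference is the gadget. The paper uses a star centered at $s$ with one vertex $v_i$ per set and puts $V_e = \{v_i : e \in S_i\}$; since the unique way to disconnect $v_i$ is to cut the edge $(s,v_i)$, cuts and set covers are in exact cost-preserving bijection and the converse direction is immediate. You instead build a depth-two tree with element-copies $b_{i,e}$ hanging off set-vertices $a_i$, which creates the additional possibility of cutting leaf edges; you handle this correctly with the exchange argument (each cut leaf edge serves only one demographic and can be traded for one set at no extra cost), but this step is avoidable with the paper's flatter construction. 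Both reductions are valid and yield the same $\ln\gamma$ hardness.
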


\begin{proof}
Suppose that we are given an instance of \textsc{Set Cover}. We create an instance of \proba{} as follows. For every set $S_i$ we create a vertex $v_i$. For every element $e \in U$ we create a demographic group $V_e = \{v_i ~|~ e \in S_i\}$. We set the covering requirement of the group $V_e$ to be $1/|V_e|$, 
i.e., we want our solution to protect at least $|V_e| \cdot (1/|V_e|) = 1$ vertex from each $V_e$. Finally, we add the designated vertex $s$ to the graph, and create edges $(s,v_i)$ for every $v_i$. Note that the resulting graph is a tree.

Now consider the optimal \textsc{Set Cover} solution $C^*$. We claim that the set of edges $\{(s,v_i) ~|~ i \in C^*\}$ is a feasible solution for the constructed instance of \proba{}. Take any demographic $V_e$ for $e \in U$. Because $C^*$ is a feasible \textsc{Set Cover} solution, it contains at least one $S_j$ with $e \in S_j$. Therefore, we are going to include the edge $(s,v_j)$ to our graph solution, and the vertex $v_j$ from the group $V_e$ is going to be protected. Finally, see that $|C^*| = |\{(s,v_i) ~|~ i \in C^*\}|$, and hence the cost of the optimal solution for the \proba{} instance, say $F^*$, is at most $|C^*|$.

Now we argue that any solution $F$ to the \proba{} instance yields a feasible solution $C_F$ for the \textsc{Set Cover} instance with $|F| = |C_F|$. Simply take $C_F = \{i \in [m] ~|~ (s,v_i) \in F\}$. It is clear that $|F| = |C_F|$. Now consider each $e \in U$. Since $F$ is feasible for \proba{}, at least one vertex $v_i \in V_e$ will be separated from $s$, and thus $(s,v_i) \in F$. Hence for that vertex $v_i$ we have $e \in S_i$ by construction. Therefore, $e$ is covered by $C_F$.

Suppose now that for some $\epsilon > 0$ we have an $(1-\epsilon) \ln \gamma$-approximation algorithm for \proba{} on trees. Then given an instance of \textsc{Set Cover}, we first construct the instance of \proba{} given by the above reduction and then run the given algorithm on that instance to get a solution $F$. Then, as discussed, we construct the corresponding Set Cover solution $C_F$, with $|F| = |C_F|$. By all the previous arguments we have $|C_F| = |F| \leq ((1-\epsilon) \ln \gamma) |F^*| \leq ((1-\epsilon) \ln |U|)|C^*|$. This contradicts Theorem~\ref{set-cover}.
\end{proof}

At a high-level, the previous theorem says that the best we can achieve for \proba{} in trees is an approximation ratio of $\Omega(\log \gamma)$. Trivially this implies the following corollary.

\begin{corollary}
Unless P$=$NP, the best approximation ratio we can achieve for general instances of \proba{} with arbitrary $\gamma$ is $\Omega(\log \gamma)$.
\end{corollary}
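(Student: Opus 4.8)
The plan is to observe that this is an immediate weakening of the preceding theorem, which already establishes that it is NP-hard to approximate \proba{} on \emph{tree} instances within any factor better than $\ln \gamma$. Since a tree is a special case of a general undirected graph, any polynomial-time algorithm that approximates general instances of \proba{} with arbitrary $\gamma$ within some factor $\rho$ would, when fed a tree instance, also produce a $\rho$-approximate solution for that tree instance (the algorithm does not need to know the input happens to be a tree). Therefore a polynomial-time $o(\log \gamma)$-approximation for general \proba{} would yield a polynomial-time $o(\log \gamma)$-approximation for \proba{} on trees, contradicting the previous theorem unless P$=$NP.

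Concretely, I would argue by contradiction: suppose for some function $g(\gamma) = o(\log \gamma)$ there is a polynomial-time $g(\gamma)$-approximation $\mathcal{A}$ for general \proba{}. Given a tree instance $\mathcal{I}$ of \proba{} with $\gamma$ demographics, run $\mathcal{A}$ on $\mathcal{I}$ viewed as a general-graph instance; the output $F$ satisfies all demographic constraints exactly and has $w(F) \le g(\gamma)\cdot OPT_{\mathcal{I}}$. This is exactly a $g(\gamma)$-approximation for the tree instance, and since $g(\gamma) = o(\log \gamma) < \ln \gamma$ for large enough $\gamma$, this contradicts the NP-hardness of approximating \proba{} on trees better than $\ln \gamma$. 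Hence no such $\mathcal{A}$ exists unless P$=$NP, i.e., the best achievable ratio for general \proba{} with arbitrary $\gamma$ is $\Omega(\log \gamma)$.

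There is essentially no obstacle here: the only thing to be slightly careful about is phrasing the ``special case'' reduction cleanly (the identity map on instances), and making sure the statement's asymptotic $\Omega(\log\gamma)$ language correctly matches the $\ln\gamma$ inapproximability from the theorem — which it does, since $\ln\gamma = \Theta(\log\gamma)$. So the proof is a one-line deduction, and I would simply write that the corollary follows immediately from the previous theorem because tree instances form a subclass of general instances.
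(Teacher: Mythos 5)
Your proposal is correct and matches the paper exactly: the paper derives this corollary "trivially" from the preceding tree-instance hardness theorem, precisely because tree instances are a subclass of general instances. Nothing further is needed.
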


\section{Addressing Individual Fairness}\label{sec:ind-fairness}

The purpose of this section is to provide an algorithm for \probb{}. To do so, we begin by giving a dynamic programming bicriteria algorithm for \probc{} on tree instances, which according to Lemma \ref{aux-red} implies an algorithm for \probc{} in general graphs. Subsequently, we show how the general graph algorithm can be incorporated in the round-or-cut framework of \cite{anegg2020}, and in this way we get as our final result a $O(\log n)$-approximation for \probb{}. 

At this point, we have to mention that the LP-based approach of Section \ref{sec:dem-fair-arb} can also be applied here (by adding the extra constraint $y_v \geq p_v$ in LP (\ref{LP-1})-(\ref{LP-4})), yielding the same approximation ratio of $O(\log n)$. However, such an approach would unavoidably lead to a bicriteria algorithm, since it will produce a solution that saves at least $(1-\epsilon)T$ vertices. On the other hand, the algorithm we present in what follows is a true approximation for \probb{}.

\subsection{A $(1,1,O(\log n))$-Bicriteria Algorithm for \normalfont{\probc{}}}\label{sec:sbcc}

Suppose we have an instance $\mathcal{I}=(V,E,B,T,s,w,a)$ of \probc{}. Given Lemma \ref{aux-red}, we focus on $G=(V,E)$ being a tree and present a dynamic programming algorithm for \probc{} in trees. 

Without loss of generality, we can assume that the tree is rooted at $s$ and is binary (see Lemma $15.18$ from \cite{approx}). Our algorithm tries to find a cut $F \subseteq E$ that minimizes $a(V \setminus \prot(V,E \setminus F,s))$ subject to $w(F) \leq B$ and $|\prot(V,E \setminus F, s)| \geq T$. Note that when we can compute a solution of optimal value to this minimization problem, minimizing $a(V \setminus \prot(V,E \setminus F,s))$ is equivalent to maximizing $a(\prot(V,E \setminus F,s))$. Therefore, the version of the problem we solve here is equivalent to the definition of \probc{} as given in Section \ref{sec:red-trees}.

Our approach relies on a table $A$. For every $v \in V$ let $T_v \subseteq V$ and $E_v \subseteq E$ be the vertices and the edges of the subtree that is rooted at $v$ (with $v$ included in $T_v$). Then, the entry $A[v,W,k]$ would represent the minimum possible $a(T_v \setminus \prot(T_v, E_v \setminus F_v, v))$, for any cut $F_v \subseteq E_v$ with $w(F_v) = W$ and $|T_v \setminus \prot(T_v, E_v \setminus F_v, v)| = k$ (see that the vertices of $T_v$ connected to $v$ in this cut are those in $T_v \setminus \prot(T_v, E_v \setminus F_v, v)$). Let also $v_r$ be the right child of $v$, and let $v_{\ell}$ be the left child of $v$. The optimal solution of $\mathcal{I}$ either cuts none of the edges from $v$ to its children, just the left edge, just the right edge, or both edges. So we just have to set $A[v,W,k]$ to the minimum of the following:
\begin{enumerate}
    \item $\min\Big{\{}A[v_{\ell}, W_{\ell}, k_{\ell}] + A[v_r, W_r, k_r]  + a_v : W_{\ell} + W_r = W \text{ and } k_{\ell} + k_r + 1= k\Big{\}}$
    \item $A[v_r, W-w_{(v,v_{\ell})}, k-1] +  a_v \text{ if } W \geq w_{(v,v_{\ell})} \text{ and } k > 1, +\infty \text{ otherwise}$
    \item $A[v_\ell, W-w_{(v,v_{r})}, k-1] +  a_v \text{ if } W \geq w_{(v,v_{r})} \text{ and } k > 1, +\infty \text{ otherwise}$
    \item $a_v$ if $w_{(v,v_{\ell})} + w_{(v,v_{r})}=W$ and $k=1$, $+\infty$ otherwise
\end{enumerate}

The first case above corresponds to cutting neither of the edges $(v,v_r)$, $(v,v_{\ell})$, the second to cutting only $(v,v_{\ell})$, the third to cutting only $(v,v_r)$, and the fourth to cutting both. 

To fill in $A$, we begin by initializing $A[v,0,1] = a_v$ for all leaves $v$ of the tree, and all other entries to $+\infty$. Then we proceed by filling the table bottom-up. Assuming that the edge weights are integers, we see that $A$ has $n^2 B$ entries, and in order to fill each of them, we need access to at most $2nB$ other entries. Hence, in total our approach requires $O(n^3 B^2)$ time. Finally, in order to find the optimal cut, we look for the minimum entry $A[s, W, k]$, such that $W \leq B$ and $k \leq n - T$.

\begin{corollary}\label{aux-cor}
When the edge weights are integers and $B = \poly(n)$, we can efficiently find an optimal solution of \probc{} in tree instances. 
\end{corollary}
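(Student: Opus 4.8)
The corollary is essentially a correctness-plus-running-time statement about the dynamic program already described, so the plan splits into two parts: (i) show the recurrence computes the claimed quantities, and (ii) bound the table size and the per-entry work. For (i), the plan is to prove by induction, processing the tree bottom-up, that for every $v$, every admissible budget $W$ and count $k$, the entry $A[v,W,k]$ equals the minimum of $a(T_v \setminus \prot(T_v, E_v \setminus F_v, v))$ over cuts $F_v \subseteq E_v$ with $w(F_v) = W$ and $|T_v \setminus \prot(T_v, E_v \setminus F_v, v)| = k$ (and $+\infty$ when no such cut exists). The structural fact I would invoke at the outset is that in a tree the unique path from any $u \in T_v$ to $v$ stays inside $T_v$; hence $u \in T_v$ is connected to $v$ in $(T_v, E_v \setminus F_v)$ iff it is connected to the child of $v$ whose subtree contains $u$ and the corresponding child-edge is not in $F_v$. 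This both justifies using the ``local'' connectivity notion stored in the table and shows that at the root it agrees with the global problem: for $u \in V$, $u \in \prot(V,E\setminus F,s)$ iff $u \notin T_s \setminus \prot(T_s, E\setminus F, s)$.

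The base case is immediate: for a leaf $v$, $E_v = \emptyset$, the only cut has weight $0$, $T_v = \{v\}$ is connected to $v$, so the only finite entry is $A[v,0,1] = a_v$. For the inductive step, the plan is to partition all cuts $F_v$ according to which of the two edges $(v,v_\ell),(v,v_r)$ they contain, which gives exactly the four cases of the recurrence; since $v$ is always connected to itself, the term $a_v$ always appears. If the edge to a child is cut, the tree structure forces that child's entire subtree to be separated from $v$, so it contributes $0$ to both the connected-count and to $a(T_v\setminus\prot)$, and the whole remaining budget and count is carried by the other child, yielding cases (2), (3), (4). If neither child-edge is cut, the sets of subtree vertices still connected to $v$ are independent across the two subtrees, so both $w(\cdot)$ and the count split additively (with a $+1$ for $v$), and case (1) minimizes over all such splits using the inductive hypothesis on both children. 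The only things to verify are exhaustiveness of the four cases and the additivity of $w$ and of $a(T_v\setminus\prot)$ across the two subtrees, which are routine.

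To read off the answer, note that at the root $k = |V \setminus \prot(V,E\setminus F,s)| = n - |\prot(V,E\setminus F,s)|$, so the requirement $|\prot| \ge T$ is exactly $k \le n-T$; and, setting $a_s := 0$ without loss of generality, $A[s,W,k] = a(V\setminus\prot)$, whose minimization is equivalent to maximizing $a(\prot)$ since $\sum_{v \in V} a_v$ is a fixed constant. Thus $\min\{A[s,W,k] : W \le B,\ k \le n-T\}$ is the optimal value of \probc{} on the tree, and an optimal cut is recovered with standard back-pointers, so combined with Lemma~\ref{aux-red} this is exactly the $(1,1,\rho)$-type object the later round-or-cut argument needs. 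For the running time: with integer weights $W$ ranges over $\{0,1,\dots,B\}$ and $k$ over $\{1,\dots,n\}$, so $A$ has $O(n^2 B)$ entries; a single entry costs $O(1)$ for cases (2)--(4) and $O(nB)$ for the double minimization in case (1), hence $O(n^3 B^2)$ overall, which is polynomial whenever $B = \poly(n)$.

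I do not expect a genuine obstacle here: the entire content is the case analysis in the inductive step, and the one point to be careful about is the claim that cutting a child-edge disconnects that child's whole subtree from $v$ — again a direct consequence of working in a tree — which is what guarantees there is no double-counting of budget or of $a$-values across the recursion and that the four cases really are exhaustive.
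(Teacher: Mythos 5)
Your proposal is correct and follows essentially the same route as the paper: the corollary is an immediate consequence of the dynamic program over $A[v,W,k]$ described in Section~\ref{sec:sbcc}, and your bottom-up induction, four-way case split on which child-edges are cut, the observation that minimizing $a(V\setminus\prot(V,E\setminus F,s))$ is equivalent to maximizing $a(\prot(V,E\setminus F,s))$, and the $O(n^3B^2)$ time bound all match what the paper does (the paper simply leaves the inductive correctness argument implicit). No gaps.
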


To make sure the edge weights are integers and $B$ is polynomially bounded, we use a standard discretization trick before running the dynamic program \cite{svitkina2004}. Specifically, for any $\epsilon > 0$, let $\lambda = \frac{\ceil*{m/\epsilon}}{B}$, where $m = |E|$. Then for each edge $e \in E$ create a new weight $w'_e = \floor*{\lambda w_e}$. Also, set $B' = \lambda B = \ceil*{m/\epsilon}$. Notice now that all new edge weights are integers and that $B'$ is polynomial in $n$. Further, using these new values we create a new instance $\mathcal{I}'=(V,E,B',T,s,w',a)$ of \probc{}. It is easy to see that if there is a solution of edge-cost $B$ for $\mathcal{I}$, then this solution has edge-cost $B'$ in $\mathcal{I}'$. In addition, for every solution of $\mathcal{I}'$ whose edge-cost is at most $B'$, its edge-cost in $\mathcal{I}$ is at most $(1+\epsilon)B$. Combining this with Corollary \ref{aux-cor} gives the following.

\begin{corollary}\label{aux-cor-2}
Our approach provides a $(1,1,1+\epsilon)$-bicriteria algorithm for \probc{} in trees.
\end{corollary}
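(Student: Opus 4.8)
The plan is to chain together the discretization construction described in the preceding paragraph with Corollary \ref{aux-cor}, which gives an exact polynomial-time algorithm for \probc{} on trees when edge weights are integers and $B = \poly(n)$. So the argument has three short pieces: (i) verify that the discretized instance $\mathcal{I}'$ has integer weights and a polynomially bounded budget, so that Corollary \ref{aux-cor} applies to it; (ii) show that the optimum of $\mathcal{I}'$ is at least as good as the optimum of $\mathcal{I}$, so that solving $\mathcal{I}'$ exactly recovers a solution whose $a$-value is at least $OPT_{\mathcal{I}}$; and (iii) show that any feasible solution of $\mathcal{I}'$ (budget $\le B'$), when viewed back in $\mathcal{I}$, has cost at most $(1+\epsilon)B$. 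Since $T$ is unchanged between $\mathcal{I}$ and $\mathcal{I}'$, the connectivity constraint $|\prot(V,E\setminus F,s)| \ge T$ transfers verbatim, and the $a$-values on vertices are also unchanged, so the only thing that moves is the edge-cost, which is exactly the $\rho = 1+\epsilon$ slack in the $(1,1,\rho)$-bicriteria guarantee.

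For piece (i): by definition $w'_e = \floor{\lambda w_e} \in \mathbb{Z}_{\ge 0}$, and $B' = \ceil{m/\epsilon}$ is an integer that is clearly polynomial in $n$ (as $m \le n^2$ and $\epsilon$ is a fixed constant). Hence Corollary \ref{aux-cor} lets us compute in polynomial time an optimal cut $F'$ for $\mathcal{I}' = (V,E,B',T,s,w',a)$. For piece (ii): take the optimal cut $F^\star$ for $\mathcal{I}$, with $w(F^\star) \le B$, $|\prot(V,E\setminus F^\star,s)| \ge T$, and $a(\prot(V,E\setminus F^\star,s)) = OPT_{\mathcal{I}}$. Since $w'_e \le \lambda w_e$ for every $e$, we get $w'(F^\star) \le \lambda w(F^\star) \le \lambda B = B'$, so $F^\star$ is feasible for $\mathcal{I}'$; as the objective $a$ and the target $T$ are identical in both instances, the optimum of $\mathcal{I}'$ has $a$-value at least $OPT_{\mathcal{I}}$, hence $a(\prot(V,E\setminus F',s)) \ge OPT_{\mathcal{I}}$. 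For piece (iii): the output cut $F'$ satisfies $w'(F') \le B'$. Using $w_e \le (w'_e + 1)/\lambda$ and $|F'| \le m$,
\begin{align}
w(F') = \sum_{e \in F'} w_e \le \frac{1}{\lambda}\Big(\sum_{e \in F'} w'_e + |F'|\Big) \le \frac{B' + m}{\lambda} \le \frac{\ceil{m/\epsilon} + m}{\lambda}. \notag
\end{align}
Since $\lambda = \ceil{m/\epsilon}/B$, this equals $B\big(1 + m/\ceil{m/\epsilon}\big) \le B(1+\epsilon)$, using $\ceil{m/\epsilon} \ge m/\epsilon$. Thus $F'$ has cost at most $(1+\epsilon)B$ in $\mathcal{I}$, its protected set still has size $\ge T$, and its $a$-value is $\ge OPT_{\mathcal{I}}$ — exactly the three conditions of Definition \ref{aux-alg} with $\rho = 1+\epsilon$.

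I do not expect a genuine obstacle here: this is a standard scaling/rounding argument and every ingredient is already in place. The one point that requires a sentence of care is piece (iii) — one must round weights \emph{down} (not up) in the discretization so that feasibility transfers in the direction needed for piece (ii), and then pay the additive-one-per-edge error in the reverse direction of piece (iii); getting the inequalities to point the right way is the only thing to be careful about. Everything else (integrality of $w'$, polynomial bound on $B'$, invariance of $T$ and $a$) is immediate from the construction.
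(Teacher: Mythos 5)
Your proposal is correct and follows essentially the same route as the paper: the paper's own argument is precisely the discretization trick of the preceding paragraph (round weights down via $w'_e = \floor*{\lambda w_e}$ with $\lambda = \ceil*{m/\epsilon}/B$, observe that feasible solutions of $\mathcal{I}$ remain feasible in $\mathcal{I}'$ while solutions of $\mathcal{I}'$ cost at most $(1+\epsilon)B$ in $\mathcal{I}$) combined with Corollary~\ref{aux-cor}. You have merely filled in the arithmetic that the paper leaves as ``easy to see,'' and your inequalities all point the right way.
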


Finally, by Corollary \ref{aux-cor-2}, Lemma \ref{aux-red} and the fact that $\epsilon$ is a constant, we get:

\begin{theorem}\label{aux-prob-thm}
Our approach provides a $(1,1,O(\log n))$-bicriteria algorithm for \probc{}.
\end{theorem}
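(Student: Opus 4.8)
The plan is to obtain the theorem by directly chaining the two ingredients developed in this subsection: the exact tree dynamic program (made polynomial via the discretization trick) and the graph-to-tree reduction of Lemma~\ref{aux-red}. No genuinely new argument is needed — the work is in checking that the hypotheses line up.

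First I would fix a constant $\epsilon > 0$, say $\epsilon = 1$, and invoke Corollary~\ref{aux-cor-2}, which gives a $(1,1,1+\epsilon)$-bicriteria algorithm for \probc{} on tree instances. Its polynomial running time is inherited from Corollary~\ref{aux-cor}: after the rescaling $w'_e = \floor*{\lambda w_e}$ with $\lambda = \ceil*{m/\epsilon}/B$, all weights are integral and the new budget $B' = \ceil*{m/\epsilon}$ is polynomial in $n$, so the table $A$ has $\poly(n)$ entries, each computed in $\poly(n)$ time.

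Next I would feed this tree subroutine into Lemma~\ref{aux-red} with $\rho = 1+\epsilon$. That lemma applies the tree decomposition of Lemma~\ref{racke-lem} to produce $k = \poly(|V|)$ trees $T_i = (V,E_i)$ with weight functions $w^i$, runs the $(1,1,1+\epsilon)$-bicriteria tree algorithm on each instance $\mathcal{I}_i = (V, E_i, B\cdot O(\log n), T, w^i, s, a)$, and lifts the best resulting cut back to $G$; since there are only $\poly(n)$ trees and each call is the (polynomial-time) discretized dynamic program, the whole procedure runs in polynomial time. By the conclusion of Lemma~\ref{aux-red}, the output is a $(1,1,(1+\epsilon)\cdot O(\log n))$-bicriteria solution for \probc{} on general graphs, and since $\epsilon$ is an absolute constant we have $(1+\epsilon)\cdot O(\log n) = O(\log n)$, which is exactly the claimed guarantee.

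The only point requiring care — and the closest thing to an obstacle — is consistency of hypotheses across the reduction: the trees produced by Lemma~\ref{racke-lem} carry arbitrary non-negative real weights $w^i$, and Lemma~\ref{aux-red} inflates the budget by an $O(\log n)$ factor, so I must ensure the tree subroutine of Corollary~\ref{aux-cor-2} truly accepts arbitrary real weights and arbitrary (not necessarily polynomially bounded) budgets. It does: the discretization is performed inside the subroutine and depends only on $m$ and $\epsilon$, converting any real-weighted, real-budget tree instance into an integer-weighted one with budget $\ceil*{m/\epsilon} = \poly(n)$, so the composition goes through verbatim without further assumptions.
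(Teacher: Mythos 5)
Your proposal is correct and follows essentially the same route as the paper, which derives Theorem~\ref{aux-prob-thm} by combining Corollary~\ref{aux-cor-2} with Lemma~\ref{aux-red} and absorbing the constant factor $1+\epsilon$ into the $O(\log n)$. Your extra check that the discretization inside the tree subroutine handles the real-weighted, budget-inflated instances produced by the reduction is a reasonable point of care that the paper leaves implicit.
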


\subsection{A Round-or-Cut Solution for \normalfont{\probb{}}}

Suppose we are given an instance $\mathcal{I} = (V,E,T,s,w,\vec p)$ of \probb{} with optimal value $OPT_{\mathcal{I}}$. For any value $B\geq 0$, let $\mathcal{F}(B) = \{F \subseteq E: w(F) \leq B \text{ and } |\prot(V,E \setminus F,s)| \geq T\}$. In the rest of the section we demonstrate a process, which given $\mathcal{I}$ and a target value $B \geq 0$, operates as follows. It either returns an efficiently-sampleable distribution $\mathcal{D}$ over the cuts in the set $\mathcal{F}(O(\log n)B)$ such that $\Pr_{F \sim \mathcal{D}}[v \in \prot(V,E \setminus F,s)] \geq p_v$ for every $v \in V \setminus \{s\}$, or returns ``INFEASIBLE''. If the latter happens, then it is guaranteed that $B < OPT_{\mathcal{I}}$. 

Using the above process in a bisection search with step $(1+\epsilon)$ over the range $[0,w(E)]$, we can efficiently compute a value $B' \leq (1+\epsilon)OPT_{\mathcal{I}}$, such that the process will not return ``INFEASIBLE'' for $B'$. This will actually yield an efficiently-sampleable distribution over $\mathcal{F}(O(\log n)B')$ that satisfies the stochastic constraints for all vertices. Hence, we get our final result. 

\begin{theorem}\label{ind-final}
For any $\epsilon > 0$ and instance $\mathcal{I}$ with optimal value $OPT_{\mathcal I}$, we construct an efficiently sampleable distribution $\mathcal{D}$ over $\mathcal{F}(O(\log n)(1+\epsilon)OPT_{\mathcal{I}})$, such that $\Pr_{F \sim \mathcal{D}}[v \in \prot(V,E \setminus F,s)] \geq p_v$ for every $v \in V \setminus \{s\}$. Moreover, the runtime of our approach is $\poly(n^{1/\epsilon})$. 
\end{theorem}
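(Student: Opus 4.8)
The plan is to build the claimed process (the one preceding Theorem~\ref{ind-final}) via the round-or-cut method of \cite{anegg2020}, using the $(1,1,O(\log n))$-bicriteria algorithm for \probc{} from Theorem~\ref{aux-prob-thm} as the separation/rounding oracle, and then wrap it in the bisection search already described in the paragraph before the statement. Fix a guess $B \geq 0$. Consider the polytope $Q(B) \subseteq \R^{V \setminus \{s\}}$ of all vectors $q = (q_v)$ for which there exists a distribution $\mathcal{D}$ over $\mathcal{F}(B)$ with $\Pr_{F \sim \mathcal{D}}[v \in \prot(V,E \setminus F,s)] = q_v$; equivalently, $Q(B)$ is the convex hull of the incidence vectors $\{\mathbf{1}[v \in \prot(V,E\setminus F,s)]\}_{v}$ over $F \in \mathcal{F}(B)$. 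The instance is feasible for budget $B$ iff the point $\vec p$ (coordinatewise, using $q_v \geq p_v$, so really iff $\vec p$ is dominated by a point of $Q(B)$) lies in the down-closure-adjusted region. We run the ellipsoid method on the question ``is $\vec p$ (dominated by a point) in $Q(B)$?''. Given a candidate point $q$, we need either to certify $q$ is achievable — in which case we stop and output the corresponding distribution — or to produce a separating hyperplane, i.e. a vector $a \in \R_{\geq 0}^{V\setminus\{s\}}$ (weights can be taken nonnegative because $Q(B)$ is of ``covering'' type) with $a \cdot q > \max_{F \in \mathcal{F}(B)} a(\prot(V,E\setminus F,s))$.

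The key step is that this separation problem is \emph{exactly} \probc{}: given the multipliers $a$, computing $\max_{F \in \mathcal{F}(B)} a(\prot(V,E\setminus F,s))$ is the \probc{} objective on instance $(V,E,B,T,w,s,a)$. We only have the $(1,1,O(\log n))$-bicriteria algorithm, so here is where the round-or-cut twist enters, following \cite{anegg2020}. Run the bicriteria algorithm; it returns a cut $F$ with $w(F) \le O(\log n) B$, $|\prot| \ge T$, and $a(\prot(V,E\setminus F,s)) \ge OPT_{\probc}(a)$. Compare this value against $a \cdot q$ (scaled appropriately by the current ellipsoid candidate):
\begin{itemize}
\item[(i)] if $a(\prot(V,E\setminus F,s)) < a \cdot q$, then \emph{a fortiori} $OPT_{\probc}(a) < a\cdot q$ even more so $\max_{F\in\mathcal{F}(B)}a(\prot) < a\cdot q$, so $a$ is a valid separating hyperplane for $Q(B)$ and we feed it back to the ellipsoid method;
\item[(ii)] otherwise $a(\prot(V,E\setminus F,s)) \ge a \cdot q$, and we \emph{record} the cut $F$ (which lives in $\mathcal{F}(O(\log n)B)$, not $\mathcal{F}(B)$ — this is the source of the $O(\log n)$ loss) and continue the ellipsoid search as if the cut-plane query had said ``$q$ might be feasible''.
\end{itemize}
After polynomially many iterations the ellipsoid method terminates. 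If at some point it declares ``$\vec p \notin Q(B)$'' using only genuine separating hyperplanes of type (i), we return ``INFEASIBLE'', and then $B < OPT_{\mathcal{I}}$: indeed a truly feasible distribution would give a point of $Q(B)$ dominating $\vec p$, contradicting the separation certificates. Otherwise, the set of cuts $F^{(1)}, \dots, F^{(\ell)}$ recorded in type-(ii) steps has the property that $\vec p$ lies in the convex hull of their protection-incidence vectors (this is the standard round-or-cut conclusion: the hyperplanes that the ellipsoid ``believed'' all separate $\vec p$ from points outside that hull, so $\vec p$ is inside it) — we get the convex combination coefficients by solving an explicit polynomial-size LP over these $\ell$ cuts. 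Sampling $F^{(j)}$ with probability equal to its coefficient gives the efficiently-sampleable distribution $\mathcal{D}$ over $\mathcal{F}(O(\log n)B)$ with $\Pr_{F\sim\mathcal{D}}[v \in \prot] \ge p_v$ for all $v$.

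Finally, wrap this in the bisection search over $B \in [0, w(E)]$ with multiplicative step $(1+\epsilon)$, as stated: we find $B' \le (1+\epsilon)OPT_{\mathcal{I}}$ on which the process does not return ``INFEASIBLE'', yielding a distribution over $\mathcal{F}(O(\log n)(1+\epsilon)OPT_{\mathcal{I}})$ meeting all stochastic constraints. The number of search steps is $O(\tfrac{1}{\epsilon}\log w(E))$ and each invocation of the process runs the ellipsoid method for $\poly(n)$ rounds, each calling the \probc{} algorithm, which after the discretization of Corollary~\ref{aux-cor-2} runs in time $\poly(n^{1/\epsilon})$; multiplying gives the claimed $\poly(n^{1/\epsilon})$ bound.

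\textbf{Main obstacle.} The delicate point is not any single calculation but the correctness of the round-or-cut bookkeeping with an \emph{approximate} oracle: one has to argue that mixing genuine separating hyperplanes (type (i)) with ``optimistic'' continuation steps (type (ii)) still yields, at termination, either a valid infeasibility certificate for budget $B$ or a point $\vec p$ genuinely in the convex hull of the recorded $\mathcal{F}(O(\log n)B)$-cuts — and that the bicriteria guarantee $a(\prot) \ge OPT_{\probc}(a)$ (rather than $\ge$ the true $\max$ over $\mathcal{F}(B)$) is exactly strong enough to make the type-(i) certificates valid for $Q(B)$. Getting the two budget regimes ($B$ for the certificates, $O(\log n)B$ for the output cuts) to line up consistently, and confirming the ellipsoid method only needs polynomially many calls despite the oracle's slack, is the crux; everything else is routine.
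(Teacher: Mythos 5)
Your overall architecture is the same as the paper's: a budget-guessing outer loop with multiplicative step $(1+\epsilon)$, and for each guess $B$ a round-or-cut procedure in the style of \cite{anegg2020} that uses the $(1,1,O(\log n))$-bicriteria \probc{} algorithm as the oracle, terminating either with a valid infeasibility certificate for budget $B$ or with polynomially many recorded cuts in $\mathcal{F}(O(\log n)B)$ whose convex combination (found by an explicit poly-size LP) dominates $\vec p$ coordinatewise. Your dichotomy (i)/(ii), the observation that the bicriteria guarantee $a(\prot) \geq OPT_{\probc}(a) = \max_{F \in \mathcal{F}(B)} a(\prot)$ is exactly what makes the ``cut'' branch a valid certificate at budget $B$ while the recorded cuts live at budget $O(\log n)B$, and the final assembly are all the right ingredients; the paper likewise delegates the ellipsoid bookkeeping to \cite{anegg2020}.

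The one place where your write-up would not execute as stated is the specification of what the ellipsoid method searches over. You run it over primal candidate points $q$ in (the up-dominated region of) the convex hull of protection-incidence vectors, and assume that for each candidate $q$ a separating hyperplane $a$ is ``produced'' --- but producing a hyperplane separating a given point from the hull of exponentially many $0/1$ vectors is itself a nontrivial optimization, and you never say where $a$ comes from; \probc{} only lets you \emph{evaluate} $\max_F a(\prot)$ once $a$ is in hand. The paper resolves this by working in the Farkas/dual space: it writes the exponential-size feasibility LP PLP$(B)$, takes its dual, and defines $Q(B)$ as the polytope of normalized infeasibility certificates $(y,\mu)$ with $\sum_v p_v y_v \geq \mu+1$ and $y(\prot(V,E\setminus F,s)) \leq \mu$ for all $F \in \mathcal{F}(B)$. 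The ellipsoid method is run \emph{there}: the candidate points are themselves the would-be hyperplanes, and \probc{} with vertex weights $y$ is precisely the separation oracle for $Q(B)$ (Theorem~\ref{ind-main}). Then your branch (i) becomes ``verify $(y,\mu) \in Q(B)$, hence PLP$(B)$ is infeasible, return INFEASIBLE'' (a terminating step, not a hyperplane fed back), and your branch (ii) becomes ``the constraint indexed by the returned cut $F \in \mathcal{F}(O(\log n)B)$ is violated; use it to cut the dual candidate and record $F$.'' With that inversion your argument matches the paper's; without it, the primal separation oracle is a missing piece.
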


Therefore, since for our final result the aforementioned process is all that is required, we start describing its details. Notice now that for a given target value $B$, we are basically interested in verifying whether or not there is a feasible solution to $\mathcal{I}$ with edge-cost at most $B$. Hence, consider the following exponential-sized linear program, which we call PLP$(B)$.

\begin{minipage}{0.5\textwidth}
\begin{align}
&& \textbf{PLP(B)}  \notag \\
\hline
\min &~~0 &\notag \\
&\smashoperator[l]{\sum_{\substack{F \in \mathcal{F}(B): \\ v \in \prot(V,E \setminus F,s)}}} x_F \geq p_v &\forall v \in V \setminus \{s\} \notag\\
&\smashoperator[l]{\sum_{F \in \mathcal{F}(B)}} x_F =1& \notag\\
&0\leq x_F \leq 1 ~&\forall F \in \mathcal{F}(B)\notag
\end{align}
\end{minipage} \vline \hfill
\begin{minipage}{0.5\textwidth}
\begin{align}
&& \textbf{DLP(B)}  \notag \\
\hline
\max &\smashoperator[l]{\sum_{v \in V \setminus \{s\}}}p_v \cdot y_v - \mu\notag &\\
&\smashoperator[l]{\sum_{v \in \prot(V,E\setminus F,s)}} y_v \leq \mu ~&\forall F \in \mathcal{F}(B) \notag\\
&0 \leq y_v ~&\forall v \in V \notag \\
&\mu \in \mathbb{R}&\notag \\ \notag 
\end{align}
\end{minipage}
\\

If we interpret $x_F$ as the probability of choosing the cut $F$ from $\mathcal{F}(B)$, we see that $B$ yields a feasible solution iff PLP$(B)$ is feasible. This is because the first LP constraint captures the fairness requirements, and the second LP constraint the fact that the resulting solution should be a distribution over $\mathcal{F}(B)$. In addition, if PLP$(B)$ is feasible, then there are only $n$ values $x_F$ with $x_F > 0$ (see Lemma 9 in \cite{karloff}), and hence the resulting distribution is efficiently-sampleable. Another important observation is that if PLP$(B)$ is feasible, then clearly its optimal value is $0$.

However, since solving PLP$(B)$ is not doable in polynomial time, we focus on its dual, which we call DLP$(B)$ and we present next to the primal LP.

Here note that DLP$(B)$ is always feasible (e.g., set all variables to $0$), and by LP duality DLP$(B)$ has an optimal value of $0$ iff PLP$(B)$ is feasible. Further, see that DLP$(B)$ is scale-invariant. In other words, if it has a feasible solution $(y',\mu')$ with strictly positive objective value, then DLP$(B)$ is unbounded because $(t y', t \mu')$ will also be feasible for any $t > 0$. Consider now the following polytope that contains all feasible solutions of DLP$(B)$ of objective value at least $1$.
\begin{align}
    Q(B) = \Big{\{}(y,\mu)\in \mathbb{R}^{n-1}_{\geq 0} \times \mathbb{R}: \sum_{v \in V \setminus \{s\}}p_v  y_v \geq \mu + 1 \land y(\prot(V,E\setminus F,s)) \leq \mu, ~\forall F \in \mathcal{F}(B)\Big{\}} \notag
\end{align}
Based on the previous discussion we make the following very crucial observation.
\begin{observation}\label{pol-feas}
PLP$(B)$ is feasible iff $Q(B) = \emptyset$.
\end{observation}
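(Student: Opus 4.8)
The plan is to argue both directions by combining LP duality with the scale-invariance of DLP$(B)$ observed just above. First I would recall the three facts already established in the excerpt: (i) PLP$(B)$ is feasible iff DLP$(B)$ has optimal value $0$ (by LP duality, since PLP$(B)$'s optimal value is $0$ whenever it is feasible, and DLP$(B)$ is always feasible via the all-zero solution); (ii) DLP$(B)$ is scale-invariant, so its optimal value is either $0$ or $+\infty$ — it can never be a finite positive number, because any feasible $(y',\mu')$ with positive objective can be scaled by arbitrary $t>0$; and (iii) the polytope $Q(B)$ is by definition exactly the set of feasible solutions of DLP$(B)$ whose objective value is at least $1$.

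For the forward direction, suppose PLP$(B)$ is feasible. Then by (i) the optimal value of DLP$(B)$ is $0$, so no feasible solution of DLP$(B)$ has objective value $\geq 1$; hence $Q(B) = \emptyset$. For the reverse direction, suppose PLP$(B)$ is infeasible. Then by (i) the optimal value of DLP$(B)$ is not $0$, and since DLP$(B)$ is always feasible, by (ii) its optimal value must be $+\infty$. In particular there exists a feasible solution $(y',\mu')$ with strictly positive objective value $c > 0$; rescaling by $t = 1/c$ gives a feasible solution $(y'/c, \mu'/c)$ with objective value exactly $1$, which therefore lies in $Q(B)$. Hence $Q(B) \neq \emptyset$. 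Combining the two directions gives the claimed equivalence.

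I do not anticipate a serious obstacle here: the statement is essentially a bookkeeping consequence of strong LP duality together with the scale-invariance structure that has already been spelled out in the surrounding text. The one point requiring mild care is making sure strong duality applies in a form that licenses step (i) for this (possibly unbounded, exponential-sized) pair of LPs — it suffices that the primal PLP$(B)$ is a feasibility LP whose value, when feasible, is $0$, and that both programs are over rational data, so weak and strong duality hold; the exponential size of $\mathcal{F}(B)$ is irrelevant to the purely structural duality argument and matters only later when we need the ellipsoid/separation machinery to actually decide emptiness of $Q(B)$. I would state the observation's proof in just a few lines, citing the immediately preceding remarks for facts (i)–(iii).
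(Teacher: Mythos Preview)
Your proposal is correct and matches the paper's approach exactly: the paper does not give a separate proof but states the observation as following directly from the ``previous discussion,'' namely the LP duality relationship between PLP$(B)$ and DLP$(B)$, the feasibility of the all-zero dual solution, and the scale-invariance of DLP$(B)$ --- precisely the facts (i)--(iii) you invoke. Your written-out argument is simply a faithful unpacking of that discussion, so there is nothing to add or compare.
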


Using the algorithm of Section \ref{sec:sbcc} we prove the following vital theorem.

\begin{theorem}\label{ind-main}
There exists a poly-time algorithm that given a point $(y,\mu) \in \mathbb{R}^{n-1}_{\geq 0} \times \mathbb{R}$ satisfying $\sum_{v \in V \setminus \{s\}}p_v \cdot y_v \geq \mu + 1$, it either verifies that $(y,\mu) \in Q(B)$, or outputs a set $F \in \mathcal{F}(O(\log n)B)$ such that $\sum_{v \in \prot(V,E\setminus F,s)} y_v >\mu$.
\end{theorem}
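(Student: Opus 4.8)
The plan is to recognize the desired statement as (essentially) a separation oracle for the polytope $Q(B)$, and to build it out of the \probc{} machinery from Section~\ref{sec:sbcc}. Given the input point $(y,\mu)$ with $\sum_{v \neq s} p_v y_v \geq \mu + 1$, the only thing that can fail for membership in $Q(B)$ is one of the constraints $y(\prot(V,E\setminus F,s)) \leq \mu$ over $F \in \mathcal{F}(B)$. So the natural move is: among all cuts $F$ with $w(F) \leq B$ and $|\prot(V,E\setminus F,s)| \geq T$, find the one maximizing $y(\prot(V,E\setminus F,s))$, and compare the maximum against $\mu$. But ``maximize $a(\prot(V,E\setminus F,s))$ subject to $w(F) \leq B$ and $|\prot| \geq T$'' is exactly \probc{} with vertex values $a_v := y_v$. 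Hence I would invoke Theorem~\ref{aux-prob-thm}: run the $(1,1,O(\log n))$-bicriteria algorithm for \probc{} on the instance $(V,E,B,T,w,s,y)$, obtaining a cut $F$ with $w(F) \leq O(\log n) B$ (so $F \in \mathcal{F}(O(\log n)B)$, using that $|\prot(V,E\setminus F,s)| \geq T$ also holds by the bicriteria guarantee), and with $y(\prot(V,E\setminus F,s)) \geq OPT$, where $OPT$ is the true optimum of the \probc{} instance.

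Now the decision logic: if $y(\prot(V,E\setminus F,s)) > \mu$, output this $F$ — it is a violated constraint witness lying in $\mathcal{F}(O(\log n)B)$, exactly as required. Otherwise, $y(\prot(V,E\setminus F,s)) \leq \mu$; but then, since $y(\prot) \geq OPT$, we get $OPT \leq \mu$, which means \emph{every} $F' \in \mathcal{F}(B)$ has $y(\prot(V,E\setminus F',s)) \leq OPT \leq \mu$. Combined with the hypothesis $\sum_{v \neq s} p_v y_v \geq \mu + 1$ and $y \geq 0$, this certifies $(y,\mu) \in Q(B)$. So in the second case we simply declare $(y,\mu) \in Q(B)$. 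The running time is polynomial because Theorem~\ref{aux-prob-thm}'s algorithm runs in polynomial time (after the discretization of Corollary~\ref{aux-cor-2}, with $\epsilon$ a constant), and the arithmetic comparisons are trivial.

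The one subtlety I would be careful about — and which I expect is the main point to get right rather than a genuine obstacle — is the asymmetry between the budget $B$ used to define $\mathcal{F}(B)$ (the constraints of $Q(B)$) and the relaxed budget $O(\log n)B$ used in the conclusion. The bicriteria algorithm only guarantees that its output $F$ is within an $O(\log n)$ factor of the budget, so $F$ need not be in $\mathcal{F}(B)$; that is fine because the theorem only asks for $F \in \mathcal{F}(O(\log n)B)$. But I must make sure that in the ``verify membership'' branch I am comparing against the \emph{true} optimum $OPT$ of the $B$-budgeted \probc{} instance (which governs all constraints of $Q(B)$), and the bicriteria guarantee $y(\prot(V,E\setminus F,s)) \geq OPT$ is precisely what makes the value $\mu$ returned by the algorithm a valid certificate for \emph{all} of $Q(B)$'s constraints, not just for the particular $F$ found. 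A second minor point: the hypothesis only promises $\sum_{v\neq s} p_v y_v \geq \mu+1$ (not equality), so when we conclude $(y,\mu)\in Q(B)$ we should note that this inequality is literally the first defining inequality of $Q(B)$, and the remaining ones are the $y(\prot)\le\mu$ family just handled. With these observations in place the proof is a short case analysis wrapped around a single call to the \probc{} algorithm.
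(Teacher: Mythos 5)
Your proposal is correct and follows essentially the same route as the paper: instantiate \probc{} with $a_v := y_v$, invoke the $(1,1,O(\log n))$-bicriteria algorithm of Theorem~\ref{aux-prob-thm}, and branch on whether $y(\prot(V,E\setminus F,s))$ exceeds $\mu$, using the guarantee $y(\prot(V,E\setminus F,s)) \geq OPT_{\mathcal{I}_{\text{aux}}}$ to certify all constraints of $Q(B)$ in the negative case. The subtleties you flag (the budget asymmetry and comparing against the true $B$-budgeted optimum) are exactly the points the paper's argument relies on.
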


\begin{proof}
We begin by constructing an instance $\mathcal{I}_{\text{aux}}=(V,E,B,T,s,w,y)$ of \probc{}, where the vertex weights correspond to the $y$ values. Then, we run the algorithm of Section~\ref{sec:sbcc} on $\mathcal{I}_{\text{aux}}$. Suppose now that $F \subseteq E$ is the solution returned by the algorithm, for which by Theorem \ref{aux-prob-thm} we have $w(F) \leq O(\log n)B$ and $|\prot(V,E \setminus F,s)| \geq T$. If  $y(\prot(V,E \setminus F,s)) > \mu$, then we return $F$ as our answer, because we are guaranteed to have $F \in \mathcal{F}(O(\log n) B)$. If on the other hand $y(\prot(V,E \setminus F,s)) \leq \mu$, then all $F' \in \mathcal{F}(B)$ have $y(\prot(V,E \setminus F',s)) \leq \mu$, because the properties of the Section \ref{sec:sbcc} algorithm ensure that $y(\prot(V,E \setminus F,s)) \geq y(\prot(V,E \setminus F',s))$. The latter immediately indicates that $(y,\mu) \in Q(B)$.
\end{proof}

Given the existence of an algorithm like the one described in Theorem \ref{ind-main}, \cite{anegg2020} prove that with a round-or-cut approach we can either show that $Q(B) \neq \emptyset$ or that $Q(O(\log n)B) = \emptyset$.  If $Q(B) \neq \emptyset$, then by Observation \ref{pol-feas} we can infer $B < OPT_{\mathcal{I}}$ and return ``INFEASIBLE''. If on the other hand $Q(O(\log n)B) = \emptyset$, then again by Observation \ref{pol-feas} we know that PLP$(O(\log n)B)$ is feasible. Furthermore, in the latter case the framework of \cite{anegg2020} provides a set $\mathcal{F}' \subseteq \mathcal{F}(O(\log n)B)$ with polynomial size, for which the following (poly-sized) LP is feasible.
\begin{align}
\min &~~0 &\notag \\
&\smashoperator[l]{\sum_{\substack{F \in \mathcal{F}': \\ v \in \prot(V,E \setminus F,s)}}} x_F \geq p_v &\forall v \in V \setminus \{s\} \notag\\
&\smashoperator[l]{\sum_{F \in \mathcal{F}'}} x_F =1& \notag\\
&0\leq x_F \leq 1 ~&\forall F \in \mathcal{F}'\notag
\end{align}
Finally, since the above can be efficiently solved, we obtain an efficiently-sampleable distribution $\mathcal{D}$ over $\mathcal{F}(O(\log n)B)$, such that $\Pr_{F \sim \mathcal{D}}[v \in \prot(V,E \setminus F,s)] \geq p_v$ for all $v \in V \setminus \{s\}$.

\section*{Acknowledgements}

Michael Dinitz was supported by NSF award CCF-1909111. Aravind Srinivasan was supported in part by NSF awards CCF-1422569, CCF-1749864, and CCF-1918749, as well as research awards from Adobe, Amazon, and Google. Leonidas Tsepenekas was  supported in part by NSF awards CCF-1749864 and CCF-1918749, and by research awards from Amazon and Google. Anil Vullikanti's work was partially supported by NSF awards IIS-1931628, CCF-1918656, and IIS-1955797, and NIH award R01GM109718.

\printbibliography

\end{document}